\newcommand*{\E}{\mathbb{E}}  
\newcommand*{\R}{\mathbb{R}}  
\newcommand*{\N}{\mathbb{N}}  
\newcommand*{\pr}{\mathbb{P}}  
\newcommand*{\var}{\mathrm{var}} 
\newcommand*\dd{\mathop{}\!\mathrm{d}} 
\newcommand*{\ind}{\mathds{1}}  
\newcommand*{\Unif}{\mathcal{U}}  
\newcommand*{\OO}{\mathcal{O}}  
\newcommand*{\fhat}{\widehat{f}_N}  
\newcommand*{\setC}[1]{\mathcal{C}^{#1}\left([0, 1]^s\right)}
\newcommand*{\setCr}{\setC{r}}
\newcommand*{\Ih}{\widehat{I}}
\newcommand*{\Iest}[1]{\Ih_\mathrm{#1}}
\DeclareMathOperator{\average}{Average}
\DeclareMathOperator{\empvar}{EmpVar}
\DeclareMathOperator{\rmse}{RMSE}
\DeclareMathOperator{\vol}{Vol}
\newtheorem{prop}{Proposition}
\newtheorem{thm}{Theorem}
\newtheorem{exa}{Example}
\newtheorem{lem}{Lemma}
\title{Adaptive stratified Monte Carlo using decision trees}
\author{Nicolas Chopin (ENSAE, IPP) 
  \and Hejin Wang (Tsinghua University) 
  \and Mathieu Gerber (University of Bristol)}
\date{}
\begin{document}

\maketitle

\begin{abstract}
  It has been known for a long time that stratification is one possible
  strategy to obtain higher convergence rates for the Monte Carlo estimation of
  integrals  over the hyper-cube $[0, 1]^s$ of dimension $s$. However,
  stratified estimators such as Haber's are not practical as $s$ grows, as they
  require $\OO(k^s)$ evaluations for some $k\geq 2$. We propose an adaptive
  stratification strategy, where  the strata are derived from a decision tree
  applied to a preliminary sample. We show that this strategy leads to higher
  convergence rates, that is the corresponding estimators converge at rate
  $\OO(N^{-1/2-r})$ for some $r>0$ for certain classes of functions.
  Empirically, we show through numerical experiments that the method may
  improve on standard Monte Carlo even when $s$ is large.
\end{abstract}

\section{Introduction}\label{sec:Introduction}

\subsection{General motivation}

This paper is concerned with the unbiased estimation of intractable integrals
with respect to the unit hyper cube of dimension $s\geq 1$:
\begin{equation*}
	I(f) = \int_{[0, 1]^s} f(x) \dd x.
\end{equation*}
A well-known approach to this problem is the Monte Carlo estimator:
\begin{equation*}
	\Iest{MC} = \frac{1}{N} \sum_{n=1}^N f(X_n)
\end{equation*}
which relies on $N$ independent and identically distributed variables,
$X_n\sim \Unif([0, 1]^s)$. The RMSE (root mean square error) of
$\hat{I}_{\text{MC}}$ is $\OO(N^{-1/2})$. We wish to derive estimators with a
faster convergence rate (at least for certain classes of functions).

Before moving on, we briefly recall the advantages of unbiased estimators of
integrals over other (deterministic or stochastic) approximations of integrals.
First, one may evaluate several realisations of an unbiased estimator (possibly
in parallel) and compute (a) their average to obtain a more accurate estimator;
and (b) their empirical variance to assess the numerical error.    Second,  several
numerical schemes in optimisation and sampling may remain valid when an
intractable quantity is replaced by an unbiased approximation. This is the case
for instance in pseudo-marginal samplers \citep{Andrieu2009}, which are Markov
chain Monte Carlo samplers where an intractable target density (or a similar
quantity) is replaced by an unbiased estimate; or in stochastic approximation
algorithms \citep{MR42668} which are gradient-based root-finding algorithms,
where the gradient is also replaced by an unbiased estimate.

\subsection{Control variates, optimal rates, and complexity}\label{sub:cv}

A well-known approach to derive lower-variance estimators is to use control
variates, that is:
\begin{equation}\label{eq:CVest}
	\hat{I}_{\text{CV}} = \frac{1}{N} \sum_{n=1}^N \left\{f(X_n) - \fhat(X_n)\right\} + I(\fhat)
\end{equation}
for a function $\fhat$ whose integral may be computed exactly. Since
\begin{equation}\label{eq:var_CVest}
	\var\left[ \hat{I}_{\text{CV}} \right] \leq \frac 1 N \|f-\fhat\|_\infty^2
\end{equation}
one may obtain a higher convergence rate by setting $\fhat$ to some
approximation of $f$, based on $N$ preliminary evaluations of function $f$, in such a way
that $\|f-\fhat\|_\infty = \OO(N^{-\alpha})$ for some $\alpha>0$.
Note that this approach requires therefore $2N$ evaluations of $f$.

This idea actually goes back to the very beginning of the history of the Monte
Carlo method. In particular,  \citet{MR115275} established that (loosely
speaking) the optimal convergence rate of any estimator based on $N$
evaluations of $f$ is $\OO(N^{-1/2-r/s})$ for functions $f\in\setCr$, the set
of $r-$times continuously differentiable functions.  His proof relies
explicitly on~\eqref{eq:CVest}, and the fact that the optimal rate for function
approximations is $N^{-r/s}$ on $\setCr$. See \citet{novak2016some} for a more
precise statement and a discussion of \citet{MR115275}' results.

More recently, \eqref{eq:CVest} have received renewed interest as a way of
deriving practical estimators of integrals, using also a preliminary sample of
size $N$, and various non-parametric estimation techniques to obtain $\fhat$.
\citet{controlfuncs} use a Bayesian approach based on a Gaussian process prior.
The main drawback of their approach is that computing $\fhat$ has $\OO(N^3)$
complexity. \citet{LelucPortierSegersZhuman} uses instead a $k-$nearest
neighbour estimator for $\fhat$; but computing $\fhat(X_n)$ for each $n$
typically has complexity $\OO(N^2)$.  These approaches remain appealing when
$f$ is expensive to compute. In that case, the $2N$ evaluations of $f$ may
remain more expensive than the computation of $\fhat$ for practical values of
$N$, even if the latter part has a larger complexity. Still, one may want to
derive estimators that have a linear, or close-to-linear complexity in $N$.
This is one of the objectives of this paper.

We mention one last possible control variate strategy: assume $N=k^s$, for some
integer $k\geq 2$, and split the hyper-cube $[0, 1]^s$ into $N$ hyper-cubes of
edge length $1/k$ (and hence volume $1/N$). Let $C_n$, $n=1, \dots, N$, denote
these $N$ `sub-cubes', let $c_n$ be the centre of $C_n$, and define $\fhat$ as
\begin{equation*}
	\fhat(x) = \sum_{n=1}^N f(c_n) \ind_{C_n}(x).
\end{equation*}
See \cref{fig:strat} for a pictorial representation.

\begin{figure}
	\begin{center}
		\includegraphics[scale=0.4]{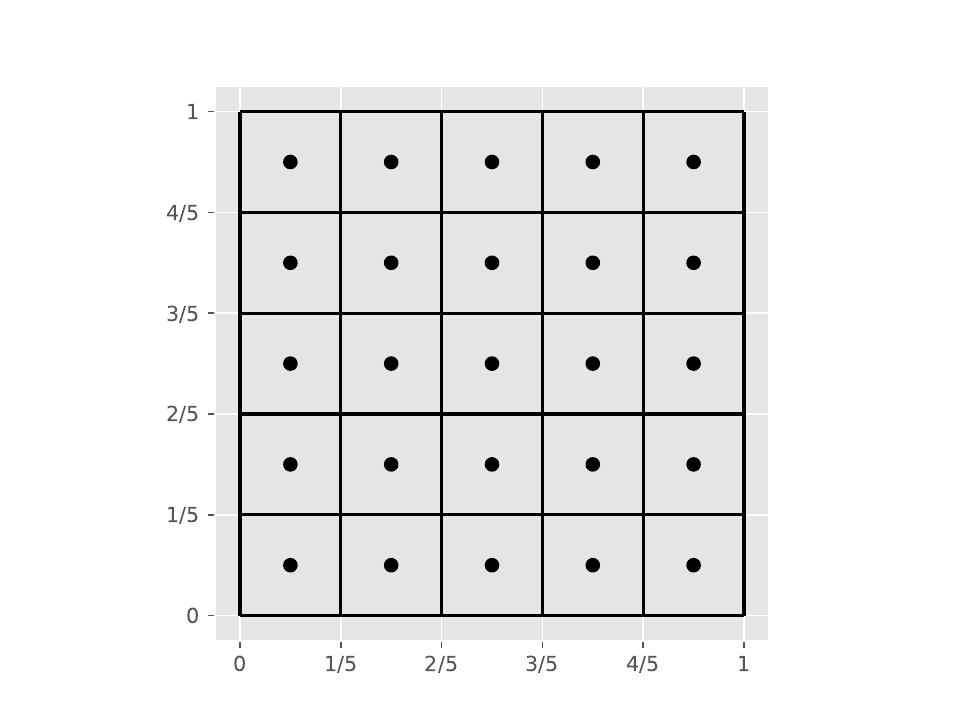}
	\end{center}
	\caption{Cubic stratification in dimension $s=5$, with $k=5$, $N=25$.
		The black dots represent the centers $c_n$.}
	\label{fig:strat}
\end{figure}

If $f\in\setC{1}$, then $\|f-\widehat{f}_N\|_\infty=\OO(k^{-1})=\OO(N^{1/s})$,
and, by~\eqref{eq:var_CVest}, the corresponding estimator is optimal for $r=1$.
It turns out that one can obtain the same rate with a simpler estimator, as
explained in the next section. Still, it will be useful in what comes next to
keep this particular control variate in mind.

\subsection{Stratification and optimal rates}

Stratification is another well-known variance reduction strategy in Monte
Carlo, see, e.g. Chap.~4 of \citet{Lemieux:MCandQMCSampling} or Chap.~8 of
\citet{Owen2018}. It amounts to splitting the integration domain into $P$ strata,
and (assuming that the strata have the same volume, and that $P$ divides $N$),
to generate in each stratum $N/P$ points.

Consider in particular as strata the $N$ sub-cubes mentioned in the previous
section; that is, consider the following estimator:
\begin{equation*}
	\Iest{Haber} = \frac 1 N \sum_{n=1}^N f(X_n),
	\qquad X_n\sim \Unif\left(C_n\right).
\end{equation*}
This estimator has been proposed by \citet{haber1966modified} and is commonly
referred to as the Haber estimator of order one.
Note the similarity with the control variate estimator mentioned at the end of
the previous section. It is easy to check that it has the same convergence
rate, $\OO(N^{-1/2-1/s})$, for $f\in\setCr$, and is therefore optimal as well (for
$r=1$). In practice, Haber's estimator is cheaper to compute (as it relies on
$N$ evaluations, rather than $2N$), and has lower variance (as can be shown by
a simple Rao-Blackwell argument). More generally, it is worth noting that
control variates and stratification are very close in spirit, and especially so
when the considered control variate is piecewise constant.

Haber's estimator has been generalised to $r=2$ by \cite{haber1967modified},
and to $r\geq 3$ by \cite{strat}.
The main drawback of these estimators  is that they are defined only for $N=k^s$,
$k=2,\ldots$. This is impractical whenever $s\gg 10$.
Even when $s \leq 10$, dividing the domain into equal strata may be sub-optimal
whenever $f$ vary more in certain directions than in others. Consider the
extreme case where $s=2$, and $f$ is constant in the first component of $x$,
but in the second. Then it would make more sense to slice the domain $[0, 1]^2$
into $N$ horizontal slices, i.e., $[0, 1]\times [(n-1)/N, n/N]$ for $n=1,\dots,
	N$.

\subsection{Proposed approach}

We develop in this paper an adaptive stratified Monte Carlo approach, where the
strata are automatically adapted to the integrand $f$. To that effect, we
generate a preliminary sample of size $N$ (as in control variates), and we
train a decision (regression) tree on this data to construct the strata.

Decision trees are particularly convenient in our
context, for two reasons, one fundamental, and one computational. The
fundamental reason is that the regression function derived from a decision tree
is piecewise constant, with `pieces' constructed recursively in order to
minimise the variance of the estimator. We use these `pieces' as strata.

The practical one is that it learning and evaluating  regression trees have
complexity $\OO(N \log N)$. Thus, the overall complexity of our approach will
also  be $\OO(N \log N)$, which is more appealing than the polynomial
complexity of approaches based on non-parametric control variates
\citep[again, see][]{controlfuncs, LelucPortierSegersZhuman}.

Finally, our approach will work for any sample size $N=2^k$, and could be
extended easily to any arbitrary value of $N$, as we shall explain.

\subsection{Plan and notations}

\Cref{sec:algo} presents decision trees, and describes the
proposed estimator.
\Cref{sec:convergence} develops some supporting theory for the proposed
estimator.
\Cref{sec:numerics} presents a numerical study to determine how
practical and efficient is the proposed estimator.
\Cref{sec:conclusion} discusses future research. 
An appendix contains proofs of the main results.

We use the short-hand $1:N$ for the set of integers $\{n:\,1\leq n \leq N\}$.
For $x\in\R^s$, we write its $j$-th component as $x[j]$. For any finite set
$\mathcal{A}$ of points in $\R^s$, of size $|\mathcal{A}|$, we denote by
$\average(\mathcal{A})$ their average (empirical mean), assuming
$|\mathcal{A}|\geq 1$, and by $\empvar(\mathcal{A})$ their empirical variance:
\[\empvar(\mathcal{A}) =
	\frac{1}{|\mathcal{A}| - 1} \sum_{x\in\mathcal{A}} \{x - \average(\mathcal{A})\}^2
\]
assuming $|\mathcal{A}|\geq 2$. In case $|\mathcal{A}| \leq 1$, we abuse
notations and  set $\empvar(\mathcal{A}) = + \infty$. This non-standard
convention will be convenient in the specific context of this paper.

We denote by $\rmse\left(\Iest{}\right)$ the RMSE of an estimator $\Iest{}$.
Since we deal only with unbiased estimator, this quantity is always the square
root of $\var(\widehat{I})$, the variance of $\widehat{I}$.

\section{Proposed algorithm}\label{sec:algo}

\subsection{Decision trees}\label{sub:tree}

Decisions trees are classical supervised learning methods, which go back to
\citet{decision_tree_paper}. We focus on regression decision trees, which, given
data $(X_n, Y_n)_{n=1,\dots,N}$, taking values in $[0, 1]^s\times \R$,
constructs a piecewise constant predictor:
\begin{equation}\label{eq:def_decision_tree}
	\widehat{f}_N(x)
	= \sum_{p=1}^P y_p \times \ind\{ x \in R_p\},
	\qquad y_p = \average\{Y_n:\, X_n\in R_p\}
\end{equation}
based on a partition of the feature space, $[0, 1]^s$, which corresponds to a
collection $\mathcal{R}=(R_p)_{p=1,\dots, P}$ of rectangles (multi-variate
intervals).

\begin{figure}
\begin{center}
  \includegraphics[scale=0.4]{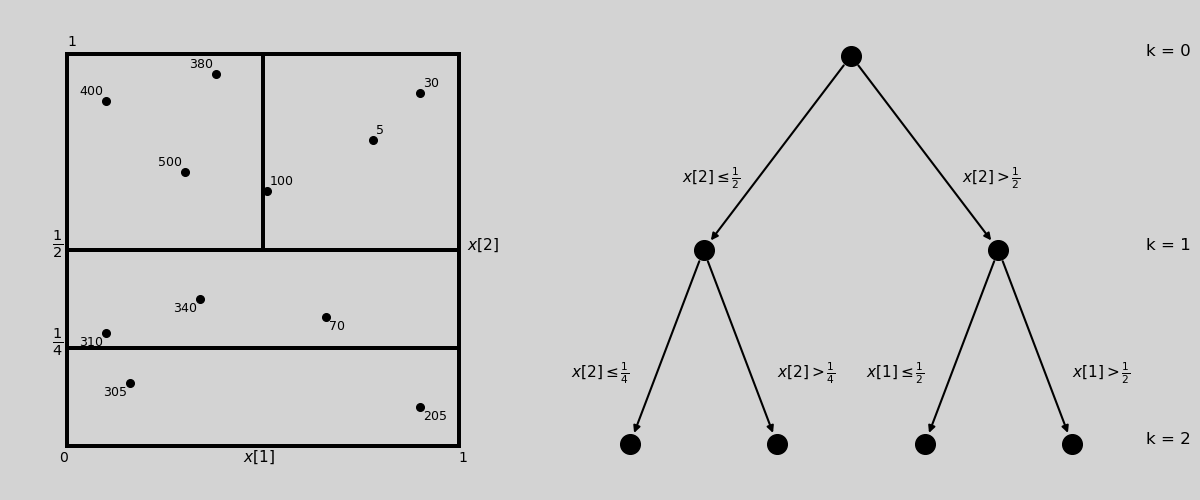}
\end{center}
\caption{Partition of $[0, 1]^2$ into rectangles (left) that correspond to a
decision tree (right).}
\label{fig:tree}
\end{figure}

This collection is constructed by growing greedily (recursively) a tree, where
each node corresponds to a simple rule, $X[j] \leq a$ or $X[j] > a$, and the
leaves of the tree defines the rectangles $R_p$; see \cref{fig:tree}. Given a
rectangle $R$, let $R[j]^+$ and $R[j]^-$ define the two (equal size)
sub-rectangles  obtained by splitting $R$ in the middle, along axis $j$. That
is, if $R=\prod_{i=1}^s [a_i,b_i]$, then $R[j]^-=\prod_{i=1}^s [a_i',b_i']$,
with $(a_i',b_i')=\left(a_i', b_i'+(b_i'-a_i')/2\right)$ if $i=j$,
$(a_i',b_i')= (a_i, b_i)$ otherwise; and $R[j]^+$ is defined similarly.

We initialise $\mathcal{R}$ as follows: $\mathcal{R}\gets \{ [0, 1]^s \}$.
Then, recursively, we  replace each rectangle $R$ in $\mathcal{R}$ by
$R[j_\star]^+$ and $R[j_\star]^-$,  where
$j_\star = \arg\min_{j\in 1:s} \widehat{\Delta}(R, j)$, and
$\widehat{\Delta}(R, j)$ is the so-called CART (classification and regression
tree) criterion:
\begin{equation}\label{eq:criterion_tree}
	\widehat{\Delta}(R, j) :=
	\frac 1 2 \empvar\{Y_j: X_j\in R[j]^+\}  + \frac 1 2  \empvar\{Y_j: X_j\in
	R[j]^-\}.
\end{equation}
In words, we are trying to construct recursively a partition such that the
$Y_i$ have as little variability as possible within each stratum. 

We stop the splitting when the have reached a (pre-specified) depth $d$ (in
that case, the volume of the considered rectangle is $2^{-d}$).
In case of equality (more than one component $j$ achieves the smallest CART
criterion), we choose randomly. This happens in particular when a given
rectangle contains less than two points; then $\widehat{\Delta}(R, j)=\infty$
for all $j$, and again we choose randomly $j_\star$ among $1, \dots, s$.

We have described only one particular, simple way to learn a decision tree.
Variants may be obtained by considering a criterion different
from~\eqref{eq:criterion_tree}; by allowing to split at an arbitrary location
(rather than in the middle); or by introducing pruning procedures (where the
tree is grown until it reaches a certain depth, then some of its branches are
pruned according to another criterion). For more details on all these variants,
we refer to Chap.~9 of \cite{MR2722294}.  The particular version we have
focused on this section happens to fit well with our approach, as we explain in
next section.

\subsection{Proposed estimator}\label{sub:algo}

We now describe our approach. We assume $N=2^k$, for a certain integer $k\geq
	1$. We proceed in two steps.

In the first step, we generate $N$ independent random variates 
$X_n\sim \Unif[0, 1]^s$, we let $Y_n = f(X_n)$, and we use the data 
$(X_n, Y_n)_{n\in[N]}$ to learn a decision tree, as explained in the previous section.
We obtain a collection of rectangles $(R_p)_{p\in 1:P}$. By construction, the
volume of $R_p$ is $2^{-d}$ where $d$ is the depth of the tree, and $P=2^d$. 

In a second step, we generate independently, for each $p$, and for $n=1,\dots,
	N_p := N 2^{-d}$, variate $\tilde{X}_{n,p} \sim \Unif(R_p)$. The
proposed estimator (called AdaStrat, for adaptive stratification) is then:
\begin{equation}\label{eq:our_estimate}
	\Iest{AdaStrat} =
	\frac 1 N \sum_{p=1}^P \sum_{n=1}^{N_p} f(\tilde{X}_{n, p})
	\qquad \tilde{X}_{n, p} \sim \Unif(R_p),
\end{equation}
and it requires $2N$ evaluations of function $f$.

The whole procedure depends on a single tuning parameter, $d$, the maximum
depth of the tree. In practice, we set $d=k$, which means that $P=N$.  In a
standard use of decision trees, this would presumably lead to over-fitting, in
particular because rectangles obtained at a depth close to $k$ may contain too
few points to estimate properly the optimal splitting directions. 
Note however that, in our context, splitting a retangle (i.e. replacing one
stratum by two smaller strata) \emph{always} reduce the variance of the stratified
estimator.  We will return to his point in our convergence study, in
\cref{sec:convergence}.

\subsection{Generalisation to any $N$}\label{sub:gen}

One limitation of our approach is that it requires $N$ to be a power of two,
$N=2^k$.  To generalise it to an arbitrary $N$, we can adapt the first step as
follows.  We initialise again our collection of rectangles as
$\mathcal{R}=\left\{[0, 1]^s\right\}$. Then we split recursively the rectangles
in $\mathcal{R}$ in a way that ensures that the volume of each rectangle is
always of the form $n_R/N$, where $n_R\in \N^+$.

Specifically, given a rectangle $R$ of volume $n_R/N$, let
$R[n^+,j]^+$ and $R[n^-,j]^+$, be the two sub-rectangles obtained by splitting
$R$ along axis $j$, at cut-point $n^-/n_R$; the  volume of  $R[n^+, j]^+$
(resp. $R[n^-, j]^-$) is then $n^+/N$ (resp. $n^-/N$), with $n^+ + n^- = n_R$.
In practice, we choose $(n^+, j)$ so as to minimise:
\begin{multline*}
	\widehat{\Delta}(R, n^+, j) :=
	\frac{n^+}{n_R} \empvar\{Y_j: X_j\in R[n^+, j]^+\} \\
	+ \frac{(n_R - n^-)}{n_R}  \empvar\{Y_j: X_j\in R[n_R - n^+, j]^-\}.
\end{multline*}
This is very close to the standard variant of decision trees where one
optimises with respect to both the direction, $j$, and the cut-point.
However, in our case, we need to impose that the cut-point is $n^+/n_R$ for a
certain $n^+\in \N^+$. In this way, when the tree is fully grown, we obtain
a partition made of rectangles $R$, whose volume again is of the form $n_R/N$,
and, in the second step, we may simply generate $n_R$ uniform variates in each
rectangle $R$.

For the sake of simplicity, we focus on the $N=2^k$ version from now on.

\section{Convergence study}\label{sec:convergence}

Our objective in this section is to establish that the RMSE
of~\eqref{eq:our_estimate} converges at a rate faster than $\OO(N^{-1/2})$ for
certain classes of functions. We proceed in two steps: first, we obtain
convergence rates for a estimator based on an oracle tree; that is a tree which
would be constructed as in \cref{sub:tree}, but with empirical variances
replaced by true variances.   And, second, we study the impact of replacing the
oracle tree by an `estimated' one.

Before we start, we  discuss very briefly the state of the art regarding
the convergence of decision trees, and its relevance to our study. 

\subsection{Relation to the existing literature}\label{sub:review_conv}

Given the connection between stratification and control variates discussed in
the introduction, see \cref{sub:cv}, we thought initially of adapting 
existing results on the consistency (and rates of convergence) of decision
trees to obtain RMSE rates for our stratified estimators.  However, we found it
easier in the end to establish ours results directly, from first principles. 
We think this is due to two factors. 

First, establishing the consistency of approaches that rely on trees and greedy
algorithms is challenging. We refer in particular to the the review
of \cite{BiauScornetRandomForestTour}, who discusses the important gap between
theory and practice for such methods (including  random forests, which are a
generalisation of decision trees), and to \cite{ScornetBiauVertConsistency}, 
who establishes the  consistency of the corresponding estimates when the true
function is additive. In other words, existing results apply to small classes
of functions. 

Second, our settings are actually simpler in several ways that the ones
considered in the statistical literature: the true function $f$
is observed without noise in our case; we only allow binary splits in a given
direction (rather than splits at an arbitrary cut-point); and, while pruning is
essential in the standard use of decision trees (in order to avoid
over-fitting), it is unnecessary in our case. This is because splitting a
stratum into two strata \emph{always} reduce the variance of our estimator. 
The pruning step is one of the factors that make the study of decision tree
estimates challenging.

\subsection{Oracle trees}\label{oracle-trees}

We call oracle tree a tree obtained by the same procedure as in
\cref{sub:tree}, except that criterion $\widehat{\Delta}(R, j)$,
see~\eqref{eq:criterion_tree}, is replaced by its theoretical counterpart:
\[ \Delta(R, j) :=
	\frac 1 2 \Delta\left(R[j]^+\right)
	+ \frac 1 2 \Delta\left(R[j]^-\right),
	\quad \text{ with }
	\Delta(R) := \var\left[f(X) \middle| X \in R\right],
\]
and the depth of the tree is set to $k$, so that the number of rectangles in the final
partition is exactly $N=2^k$, and each rectangle has volume $1/N$. Let
\[
	\Iest{oracle} = \frac 1 N \sum_{n=1}^N f(\widetilde{X}_n),
	\quad \widetilde{X}_n \sim \Unif(R_n)
\]
be the corresponding estimator.

Of course, this estimator is not implementable in practice, but its convergence
rate gives us a lower bound on the error rate for the actual procedure. We
start with a basic result.

\begin{thm}\label{thm:linear}
	Assume $f(x) = \lambda^\top x$, with $\lambda\in\R^s$ such that
	$\|\lambda\|_0 = s_0 \leq s$. Then
	\[
		\rmse\left(\Iest{oracle}\right)
		= \OO(N^{-1/2-r(s_0)})
	\]
	with $r(s_0) = - \log(1 - 3/4s_0) / 2\log(2) \geq 0.541 / s_0$.
\end{thm}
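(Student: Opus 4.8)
The plan is to track how the oracle tree's splitting procedure reduces the total stratified variance, and to show that each level of splitting kills a constant fraction of the variance. Since the function is linear, $f(x)=\lambda^\top x$, the conditional variance on any rectangle $R=\prod_i[a_i,b_i]$ decomposes as a sum of per-axis contributions, $\var[f(X)\mid X\in R]=\sum_{i=1}^s \lambda_i^2\,(b_i-a_i)^2/12$, because the components of a uniform variate on a rectangle are independent. This is the key structural fact I would establish first: the variance is additive across axes, and splitting $R$ along axis $j$ in the middle replaces the $j$-th term $\lambda_j^2(b_j-a_j)^2/12$ by $\lambda_j^2(b_j-a_j)^2/48$ in each of the two children, i.e.\ it removes exactly three quarters of the $j$-th axis contribution while leaving the other axes untouched.

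Next I would analyze the oracle's greedy choice. Since $\Delta(R,j)=\tfrac12\Delta(R[j]^+)+\tfrac12\Delta(R[j]^-)$ and the reduction from splitting along axis $j$ equals $\tfrac34\cdot\lambda_j^2(b_j-a_j)^2/12$, the oracle picks the axis $j$ with the largest current contribution $\lambda_j^2(b_j-a_j)^2$. The plan is to show that over any window of $s_0$ consecutive splits (one descent of $s_0$ levels along a single root-to-leaf path), the total axis-variance gets multiplied by at least a fixed factor strictly less than one. The cleanest route: among the $s_0$ active axes (those with $\lambda_j\neq 0$), a greedy scheme that always attacks the largest term drives the sum down, and after $s_0$ greedy middle-splits the worst case is that each active axis has been halved once. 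I would make this precise by bounding the maximum over all leaves of the residual variance: after depth $d$, along every root-to-leaf path the splits are distributed among the $s_0$ axes, and a pigeonhole/averaging argument shows some axis is split at least $\lfloor d/s_0\rfloor$ times, so that leaf's variance is at most a $(1/4)^{\lfloor d/s_0\rfloor}$-fraction contribution from that axis — but I actually need control of the \emph{sum} over all axes, not a single one.

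The cleaner formulation, which I would adopt, is to track $V_d:=\max_{R\in\mathcal R_d}\var[f(X)\mid X\in R]$ where $\mathcal R_d$ is the depth-$d$ partition. The greedy split reduces the largest-contribution axis, and a short argument shows that after $s_0$ levels every axis contribution has been at least halved (equivalently the per-leaf variance has dropped by a factor $(1-3/(4s_0))$ per level on average, since removing $3/4$ of the largest of $s_0$ terms removes at least $3/(4s_0)$ of the sum). Hence $V_{d}\le (1-3/(4s_0))^{d}\,V_0$. With $d=k$ and $N=2^k$, this gives $V_k\le N^{\log_2(1-3/(4s_0))}V_0$. Finally, since the oracle estimator averages $N$ independent stratum-estimates each of variance $N_p^{-1}\var[f\mid R_p]$ with $N_p=1$ per stratum in the $d=k$ case, I get $\var(\Iest{oracle})=N^{-2}\sum_p \var[f\mid R_p]\le N^{-2}\cdot N\cdot V_k=N^{-1}V_k$, yielding $\rmse(\Iest{oracle})=\OO(N^{-1/2}\cdot N^{\frac12\log_2(1-3/(4s_0))})=\OO(N^{-1/2-r(s_0)})$ with $r(s_0)=-\log(1-3/(4s_0))/(2\log 2)$, matching the claim. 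The final lower bound $r(s_0)\ge 0.541/s_0$ follows from the elementary inequality $-\log(1-x)\ge x$ applied at $x=3/(4s_0)$, giving $r(s_0)\ge 3/(8s_0\log 2)\approx 0.541/s_0$.

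**The main obstacle** I anticipate is the per-level contraction bound: controlling the sum of axis contributions under the greedy rule rather than a single axis. The greedy oracle removes $3/4$ of the \emph{largest} term, which is at least the average, so it removes at least $(3/4)\cdot(1/s_0)=3/(4s_0)$ of the total sum at each split — this is the crucial step and it must be verified uniformly over all rectangles in the partition, since the depth-$k$ tree is not balanced across axes and different leaves may have split different axes. I would need to confirm that the worst-case leaf, not just an average leaf, obeys the geometric decay, which requires arguing that the greedy rule prevents any single axis from being neglected for too long while its contribution dominates.
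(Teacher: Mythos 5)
Your proposal is correct and takes essentially the same route as the paper: the additive per-axis variance formula $\Delta(R)=\sum_j \lambda_j^2\mu_j^2/12$, the observation that the greedy middle-split removes $3/4$ of the largest of the $s_0$ active contributions and hence at least a fraction $3/(4s_0)$ of the total (max at least average), and $N=2^k$ converting the depth-$k$ contraction $(1-3/(4s_0))^k$ into the rate $N^{-2r(s_0)}$, exactly as in the paper's proof. Your anticipated obstacle is in fact a non-issue — the max-versus-average inequality applies separately to each rectangle with its own edge lengths $\mu_j$, so every child's variance is at most $(1-3/(4s_0))$ times its parent's uniformly over all leaves, making your bound $V_d\le(1-3/(4s_0))^d V_0$ immediate — and your discarded side remark that after $s_0$ levels every active axis has been split at least once is false in general (for very unbalanced $\lambda$ the greedy rule splits the dominant axis repeatedly), but since your final chain never uses it, the argument stands; your leaf-wise maximum bookkeeping is a harmless variant of the paper's volume-weighted total-variance accounting.
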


This result strongly suggests that, while our approach may lead to higher
convergence rates, it is not able to achieve the best possible rate, $1/2+1/r$
(assuming $s_0=s$) on the class of $\setC{1}$ functions.

On the other hand, the fact that the rate depends on $s_0$, the `actual'
dimension of $f$ indicates that our procedure is sparsity adaptive. This is
hardly surprising, giving how the oracle tree is constructed: if
$\lambda[j]=0$, then $\Delta(R, j)=0$ for any rectangle $R$, so one never
splits along the $j$-th axis. Still, it is worth pointing out that, contrary to
Haber's estimator, or other methods based on a fixed stratification, our
approach will automatically detect irrelevant dimensions.

We now turn our attention to a more general class of functions.

\begin{thm}\label{thm:oracle_cls}
	Assume $f$ is strictly increasing (with respect to each component), and 
    that there exist real numbers $ 0 < \alpha_i < \beta_i$,
    $i=1,\ldots,s$, such that, for all $x,y\in [0,1]^s$,
	\[ 
	\Big(\sum_{i=1}^s\alpha_i^2(x_i-y_i)^2\Big)^{1/2}
    \leq \left|f(x)-f(y)\right|
    \leq \Big(\sum_{i=1}^s\beta_i^2(x_i-y_i)^2\Big)^{1/2}.
	\]
	Then we have
	\[ \rmse\left(\Iest{oracle}\right)
		= \OO(N^{-1/2-r(s)})
	\]
	with
	\[r(s) = \frac{1}{2\log 2}\log\left(\frac{\frac{3}{4} + \sum_{i=1}^s \frac{\beta^2_i}{\alpha^2_i}}{\sum_{i=1}^s \frac{\beta^2_i}{\alpha^2_i}}\right).
	\]
\end{thm}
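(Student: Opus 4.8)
The plan is to track how the variance of the oracle estimator decreases as the tree grows, exploiting the key fact (noted repeatedly in the paper) that every split strictly reduces the estimator variance. Recall that for a stratified estimator based on a partition $\mathcal{R}=(R_p)_{p=1}^P$ of equal-volume cells, the variance of $\Iest{oracle}$ is $\frac{1}{N^2}\sum_p \frac{1}{N_p}\var[f(X)\mid X\in R_p]\cdot\text{(vol factor)}$, which up to constants is proportional to $N^{-2}\sum_p \Delta(R_p)$. So the whole game reduces to controlling the quantity $V(\mathcal{R}) := \frac{1}{P}\sum_{p=1}^P \Delta(R_p)$, the average within-cell variance, as a function of the depth. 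First I would write $\rmse(\Iest{oracle})^2 = \frac{1}{N^2}\sum_p \frac{\vol(R_p)^2}{N_p}\,\frac{\Delta(R_p)}{\vol(R_p)}$ cleanly in terms of $V(\mathcal{R})$ and verify that it behaves like $N^{-1}\cdot V(\mathcal{R})$, so that a bound $V(\mathcal{R}) = \OO(N^{-2r(s)})$ yields exactly the claimed RMSE rate $\OO(N^{-1/2-r(s)})$.

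The core of the argument is a one-step contraction lemma. For a single cell $R$ split optimally into $R[j_\star]^+, R[j_\star]^-$, I want to show that the average child variance $\frac12\Delta(R[j_\star]^+)+\frac12\Delta(R[j_\star]^-)$ is at most $\rho\,\Delta(R)$ for some contraction factor $\rho = \rho(s) < 1$. The two-sided bi-Lipschitz-type hypothesis is precisely what makes this tractable: the upper bound $\beta_i$ controls how much variance a cell can carry in direction $i$, and the lower bound $\alpha_i$ guarantees that splitting along the direction of largest variance removes a definite fraction of it. Concretely, for a split along axis $i$ at the midpoint, the variance decomposition gives $\Delta(R) = \frac12\Delta(R[i]^+)+\frac12\Delta(R[i]^-) + \frac14(\mu^+ - \mu^-)^2$ where $\mu^\pm$ are the conditional means; the reduction term $\frac14(\mu^+-\mu^-)^2$ is bounded below using the $\alpha_i$ lower bound (the means in the two halves differ by an amount proportional to $\alpha_i$ times the edge length), while $\Delta(R)$ itself is bounded above by $\sum_i \beta_i^2 \cdot(\text{edge}_i)^2/c$ using the upper bound. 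Since the greedy rule picks the best axis, it does at least as well as splitting along the axis maximizing $\alpha_i^2\cdot(\text{edge}_i)^2$, and optimizing the ratio of guaranteed-reduction to total-variance over which axis has the longest scaled edge produces exactly the factor $\frac{\sum \beta_i^2/\alpha_i^2}{3/4 + \sum \beta_i^2/\alpha_i^2}$, which matches $2^{-2r(s)}$ after taking logarithms.

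Given the contraction lemma, the recursion is routine: after $k=\log_2 N$ levels of splitting, every cell has had its variance multiplied by at most $\rho$ at each level it was cut, so $V(\mathcal{R}) \leq \rho^{k}\,\Delta([0,1]^s) = \OO(N^{\log_2 \rho}) = \OO(N^{-2r(s)})$, using $\log_2\rho = -2r(s)$. Here I need the tree to be grown to full depth $k$ so that every cell is split at every level, and the equal-volume structure (each leaf has volume $2^{-k}$) keeps the accounting uniform across cells. Feeding $V(\mathcal{R}) = \OO(N^{-2r(s)})$ into the variance formula from the first step closes the proof. I would present \cref{thm:linear} as the special case where $f(x)=\lambda^\top x$ has exactly $s_0$ nonzero coordinates: directions with $\lambda[j]=0$ carry zero variance and are never split, so effectively $s$ is replaced by $s_0$ and the $\beta_i^2/\alpha_i^2$ ratios all equal $1$, giving $\frac{3/4 + s_0}{s_0}$ inside the logarithm and hence $r(s_0) = -\log(1-\frac{3}{4s_0})/2\log 2$ after simplification.

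The step I expect to be the main obstacle is the lower bound on the per-split variance reduction $\frac14(\mu^+-\mu^-)^2$. Because the greedy rule compares axes using the true $\Delta(R,j)$, I cannot simply assume it splits the longest edge; I must show that whichever axis it does choose yields a reduction at least as large as a specific explicit fraction of $\Delta(R)$, and this requires simultaneously controlling the conditional mean gap from below (via $\alpha_i$ and monotonicity, to rule out cancellation) and the total variance from above (via the $\beta_i$) \emph{uniformly over the shape of $R$}, since after several splits the cell $R$ need not be a cube — its edge lengths in different directions will have drifted apart precisely because some axes were favored. Making the contraction factor depend only on the $\beta_i^2/\alpha_i^2$ ratios, and not on the current aspect ratio of $R$, is the delicate part, and is where the strict monotonicity hypothesis (ensuring the mean gap does not vanish) does the essential work.
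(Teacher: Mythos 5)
Your plan is essentially the paper's own proof: the same law-of-total-variance decomposition of $\Delta(R)$ into the squared conditional-mean gap plus the average child variance, the lower bound on the mean gap via the $\alpha_j$ and strict monotonicity (to rule out cancellation), the upper bound $\var\left[f(X) \,\middle|\, X\in R\right]\leq \frac{1}{12}\sum_{i=1}^s\beta_i^2\mu_i^2$ via the $\beta_i$, the comparison of the greedy axis with the axis maximizing $\alpha_j^2\mu_j^2$ --- which is exactly the device the paper uses to make the contraction factor uniform over the aspect ratio of $R$, the point you correctly single out as delicate --- and the geometric recursion giving $\var[\Iest{oracle}]\leq \rho^k\var[\Iest{MC}]$ with $\rho = \Sigma/(\tfrac{3}{4}+\Sigma)$, $\Sigma=\sum_i \beta_i^2/\alpha_i^2$. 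One peripheral claim is wrong, though: \cref{thm:linear} is \emph{not} the special case $\alpha_i=\beta_i=|\lambda_i|$. A linear $f(x)=\lambda^\top x$ violates the lower bi-Lipschitz hypothesis (take $x-y$ orthogonal to $\lambda$: then $f(x)-f(y)=0$ while $\sum_i\lambda_i^2(x_i-y_i)^2>0$), and even formally, setting all ratios $\beta_i^2/\alpha_i^2$ to one yields $r=\frac{1}{2\log 2}\log\left(1+\frac{3}{4s_0}\right)$, which is strictly smaller than the rate $-\frac{1}{2\log 2}\log\left(1-\frac{3}{4s_0}\right)$ of \cref{thm:linear}; the paper proves that theorem separately, using the exact variance $\Delta(R)=\sum_j \mu_j^2\lambda_j^2/12$ available for linear functions, precisely to obtain this sharper constant (and indeed it only remarks that the two rates agree ``essentially'' as $\beta_i\to\alpha_i$ and $s$ grows). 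This does not affect the validity of your argument for the stated theorem itself.
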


Note that for large $s$ and $\beta_i \to \alpha_i$, we obtain essentially the
same rate as in \cref{thm:linear}. 

The strongest assumption in \cref{thm:oracle_cls} is the fact that $f$ must be
strictly increasing in every direction. One may wonder whether this condition
is really needed to obtain higher convergence rates. The following example
shows that this is indeed the case.

\begin{exa}
	Take $s=2$, and $f(x) = \lambda x[1] + \sin(2 \pi x[2])$, for some $\lambda >
		0$. Then it is easy to
	see that the oracle tree will never split along the second direction, because
	doing so would keep the variance unchanged. As a  result, the variance of
	$\Iest{oracle}$ cannot converge faster than $\OO(N^{-1/2})$.
\end{exa}

We note in passing that is is easy to extend \cref{thm:oracle_cls} to sparse
functions, that is functions that are constant in certain components $i$; for
such $i$, replace $\alpha_i$, $\beta_i$ and $\beta_i/\alpha_i$ by 0 in the
expressions in that theorem. In that sense, that result is also "sparsity
adaptive", as \cref{thm:linear}.

\subsection{Estimated trees}\label{estimated-trees}

To determine the behaviour of the actual estimator, we wish to study the
probability  that the  estimated tree (constructed as explained in
\cref{sub:tree}) matches the oracle tree. This raises two issues however.
First, if we split recursively the strata too many times, we end up with strata
that contain a small number of points. The quantities $\widehat{\Delta}(R, j)$,
used to decide along which axis one should split may become too noisy to ensure
that the right direction is chosen. 
This suggests studying instead the probability that the oracle tree and the
estimated tree matches only until a certain depth $L_k \ll k$. 

Second, even we do so, we may have cases where the choice of the direction is a
`close call', that is, there exists $j\neq j^\star$ such that $\Delta(R, j) -
\Delta(R, j^\star) \ll 1$. (Consider for instance the first splitting operation
when $f(x)=x[1] + 0.99 x[2]$.) In such a case, it is difficult to make the
probability of taking the right decision large.

To deal with this technical issue, we define (for any $\varepsilon>0$) the notion of an
$\varepsilon$-oracle tree, which is a tree constructed essentially as the oracle tree
(using the theoretical criterion $\Delta(R, j)$ to decide which direction $j$
to pick), except in cases when, while splitting rectangle $R$, there exists a
direction $j$ such that
\begin{equation}\label{eq:close_call}
	\Delta(R, j) \leq \Delta(R, j^\star) \frac{1+\varepsilon}{1 - \varepsilon}.
\end{equation}
In this situation, the $\varepsilon$-oracle is allowed to choose an arbitrary
direction $j$ (among those $j$ such that~\eqref{eq:close_call} holds). We are
able to show that the stratified estimator based on an $\varepsilon$-oracle
converges at a slightly slower rate than the one based on the oracle tree. 

With this approach, we are able to obtain convergence rates for our actual
estimator under essentially the same conditions as for the oracle estimator. 

\begin{thm}\label{thm:estimated_tree}
  Consider the set-up of \cref{thm:oracle_cls}. 
  Then, one has \[ \rmse\left(\Iest{AdapStrat}\right) = \OO(N^{-1/2-r'}) \]
  for any $0<r'<r(s)$, where $r(s)$ is defined as in \cref{thm:oracle_cls}.
\end{thm}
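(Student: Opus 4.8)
The plan is to reduce everything to a volume-weighted within-stratum variance and to control the discrepancy between the estimated and the oracle partitions through a good event. Since, conditionally on the random partition $\mathcal{R}=(R_p)$, the estimator $\Iest{AdaStrat}$ is an unbiased stratified estimator with one point per cell, its conditional variance is $\frac1N S_k$, where I set $S_\ell := \sum_{R}\vol(R)\,\Delta(R)$, the sum running over the depth-$\ell$ rectangles of the tree and $\Delta(R)=\var[f(X)\mid X\in R]$. By the law of total variance (the conditional expectation being $I(f)$ on every partition), $\var[\Iest{AdaStrat}] = \frac1N\,\E[S_k]$. The first basic fact I would record is monotonicity: for a midpoint split $R\mapsto (R[j]^+,R[j]^-)$ one has $\frac12\Delta(R[j]^+)+\frac12\Delta(R[j]^-)\le\Delta(R)$, so $S_{\ell+1}\le S_\ell$ for \emph{any} tree, oracle or estimated. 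In particular $S_k\le S_L$ for every intermediate depth $L\le k$, which is what will let me trade depth against estimation error.

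Next I would introduce the good event $G$ that the estimated tree coincides with a valid $\varepsilon$-oracle tree down to depth $L$ (to be chosen), and decompose $\E[S_k]=\E[S_k\ind_G]+\E[S_k\ind_{G^c}]$. On $G$ the monotonicity above gives $S_k\le S_L$, and $S_L$ is then exactly the depth-$L$ weighted variance of a valid $\varepsilon$-oracle partition; invoking the $\varepsilon$-oracle analysis (the extension of \cref{thm:oracle_cls} to the $\varepsilon$-oracle, whose geometric contraction rate $r(s,\varepsilon)$ satisfies $r(s,\varepsilon)\to r(s)$ as $\varepsilon\downarrow 0$) yields $S_L\le C\,2^{-2r(s,\varepsilon)L}$ on $G$. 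On $G^c$ I would use only the crude bound $S_k\le S_0=\var[f]=\OO(1)$, valid because $f$ is continuous on the compact $[0,1]^s$. Altogether
\[
\var[\Iest{AdaStrat}]\le \frac{C\,2^{-2r(s,\varepsilon)L}}{N}+\frac{\var[f]}{N}\,\pr(G^c).
\]

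The crux is to show $\pr(G^c)$ is small. Because all splits are at midpoints, the rectangles that can ever arise at depth $<L$ form a \emph{deterministic} family of dyadic rectangles of size at most $(2s)^L$, so I would union-bound over this fixed family rather than over the random nodes. For a fixed rectangle $R$ at depth $\ell<L$ I need two events: that its cardinality concentrates around $N\vol(R)\approx N2^{-\ell}$ (a Binomial tail), and that each $\widehat{\Delta}(R,j)$ is within a relative factor $1\pm\varepsilon'$ of $\Delta(R,j)$ (concentration of empirical variances of bounded samples). On the intersection of these events the empirical minimiser over $j$ is, by construction of the slack $(1+\varepsilon)/(1-\varepsilon)$ in \eqref{eq:close_call}, always a legitimate $\varepsilon$-oracle direction, so the estimated and $\varepsilon$-oracle trees agree down to depth $L$, i.e.\ $G$ holds. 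Since every node considered has $\gtrsim N2^{-L}$ points, a Bernstein-type bound gives a per-node failure probability $\le \exp(-c\,N2^{-L}\varepsilon'^2)$, whence $\pr(G^c)\le (2s)^{L+1}\exp(-c\,N2^{-L}\varepsilon'^2)$.

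It remains to balance the two terms. I would first fix $\varepsilon>0$ (and the associated $\varepsilon'$) small enough that $r(s,\varepsilon)>r'$, which is possible since $r(s,\varepsilon)\to r(s)>r'$, and then set $L=\lceil\theta k\rceil$ with $r'/r(s,\varepsilon)<\theta<1$. With $N=2^k$ the first term is $\frac{C}{N}2^{-2r(s,\varepsilon)\theta k}=\OO(N^{-1-2r(s,\varepsilon)\theta})=\OO(N^{-1-2r'})$, while $N2^{-L}=N^{1-\theta}$ grows polynomially because $\theta<1$, so $\pr(G^c)$ decays faster than any power of $N$ and the second term is $o(N^{-1-2r'})$. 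This gives $\var[\Iest{AdaStrat}]=\OO(N^{-1-2r'})$ and hence the claimed RMSE rate. The main obstacle I anticipate is the concentration step in the previous paragraph: the empirical criteria must be controlled \emph{uniformly} across rectangles of very different sizes, and the control must be relative (scale-free), since $\Delta(R)\to0$ as the cells shrink; the $(1+\varepsilon)/(1-\varepsilon)$ slack of the $\varepsilon$-oracle is precisely what converts this relative control into correct, non-compounding split decisions along every root-to-depth-$L$ path.
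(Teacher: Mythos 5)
Your proposal is correct and follows essentially the same route as the paper's proof: show that the estimated tree is, with high probability, a valid $\varepsilon$-oracle down to a truncated depth $L<k$ (via relative, scale-free concentration of the empirical variances plus a union bound), use the monotonicity of splitting beyond depth $L$ together with the $\varepsilon$-oracle contraction rate, and let $\varepsilon\downarrow 0$ to approach $r(s)$. Your two executional variants --- union-bounding over the deterministic family of dyadic rectangles rather than the random tree nodes, and taking $L=\lceil\theta k\rceil$ with $\theta<1$ instead of the paper's $L_k=\lfloor k-c(\log k)^{1+\gamma}\rfloor$ --- are harmless (arguably cleaner), and the one step you only flag rather than prove, namely per-node relative concentration with a constant uniform over cell sizes, is exactly what \cref{lem:rectangle_bound} supplies through $\Delta(R)\geq \frac{\alpha^2}{12}\sum_{i=1}^s\mu_i^2$ against the squared range $D(R)\leq \beta^2\sum_{i=1}^s\mu_i^2$.
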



\section{Numerical experiments}\label{sec:numerics}

In this section, we compare our adaptive stratification estimator with the
standard Monte Carlo estimator, $\Iest{MC}$, and Haber estimator or order one,
$\Iest{Haber}$ when possible (recall that the last estimator is defined only
for $N=k^s$, $k\geq 2$).  Our comparison is in terms of relative RMSE (RMSE
divided by true value, or an estimate based on a grand mean), and how it 
varies as a function of $N$. The RMSE is evaluated over 256 independent
realisations of the estimators. 

\subsection{Toy example}

We consider the following linear function $f(x) = \exp\{\sum_{i=1}^s\lambda_i
x_i\}$, with decreasing weights,   $\lambda_i = i^{-2}$.  This way, component
$i$ contributes less and less to the overall variance as $i$ grows; this is a
favourable case for our adaptive stratification strategy.

\cref{fig:toy_rmse} compares the performance (i.e. RMSE vs $N$) of the following
unbiased estimators when $s=5$: standard Monte Carlo, $\Iest{MC}$, Haber of order one
(defined in \cref{sec:Introduction}), and our stratified estimator based on a
decision tree grown either until depth $d=k$.
Interestingly, our approach outperforms even Haber's estimator, although the
latter is supposed to converge at a higher rate. The dashed lines represents
the theoretical convergence rates of the considered estimators. 

\cref{fig:toy_rmse_higher_dim} does the same comparison for $s=15$ and $s=50$, but this
time Haber's estimate is omitted, as it is only defined for $N=k^s$, $k\geq 2$,
and thus it cannot be computed for reasonable values of $N$ whenever $s \gg
10$. We note that our approach leads to a measurable improvement even when
$s=50$. 

\begin{figure}
  \centering \includegraphics[width=0.5\linewidth]{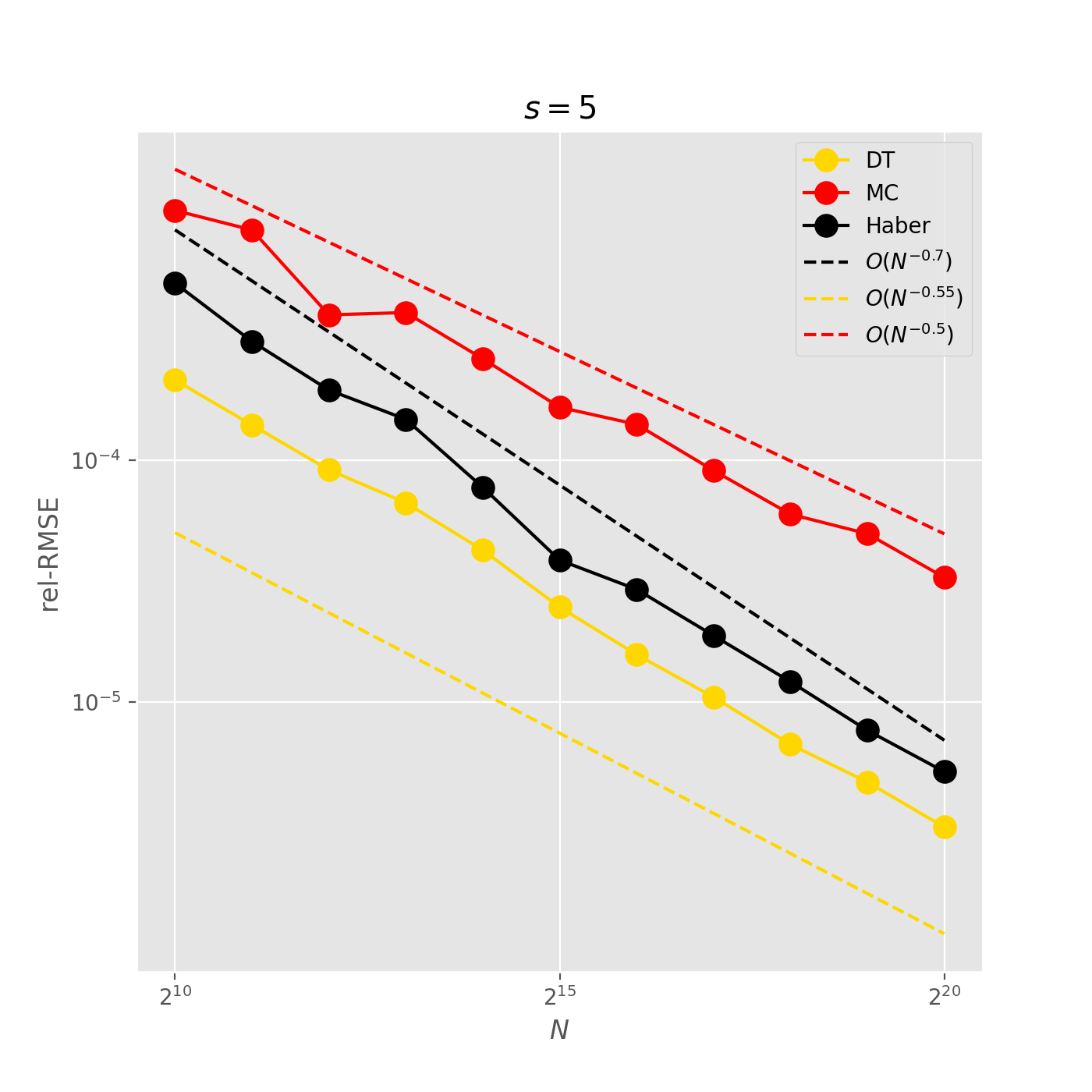}
  \caption{Toy example, $s=5$. RMSE vs $N$ (estimated from 256 independent
    realisations) for the following unbiased estimators: standard Monte Carlo
    (MC, red), Haber of order one (Haber, black), and our adaptive decision-tree strategy
    (DT, yellow). Dashed lines represent the theoretical convergence rates of
  these three estimators.}
\label{fig:toy_rmse}
\end{figure}

\begin{figure}
  \centering  
  \includegraphics[width=0.39\textwidth]{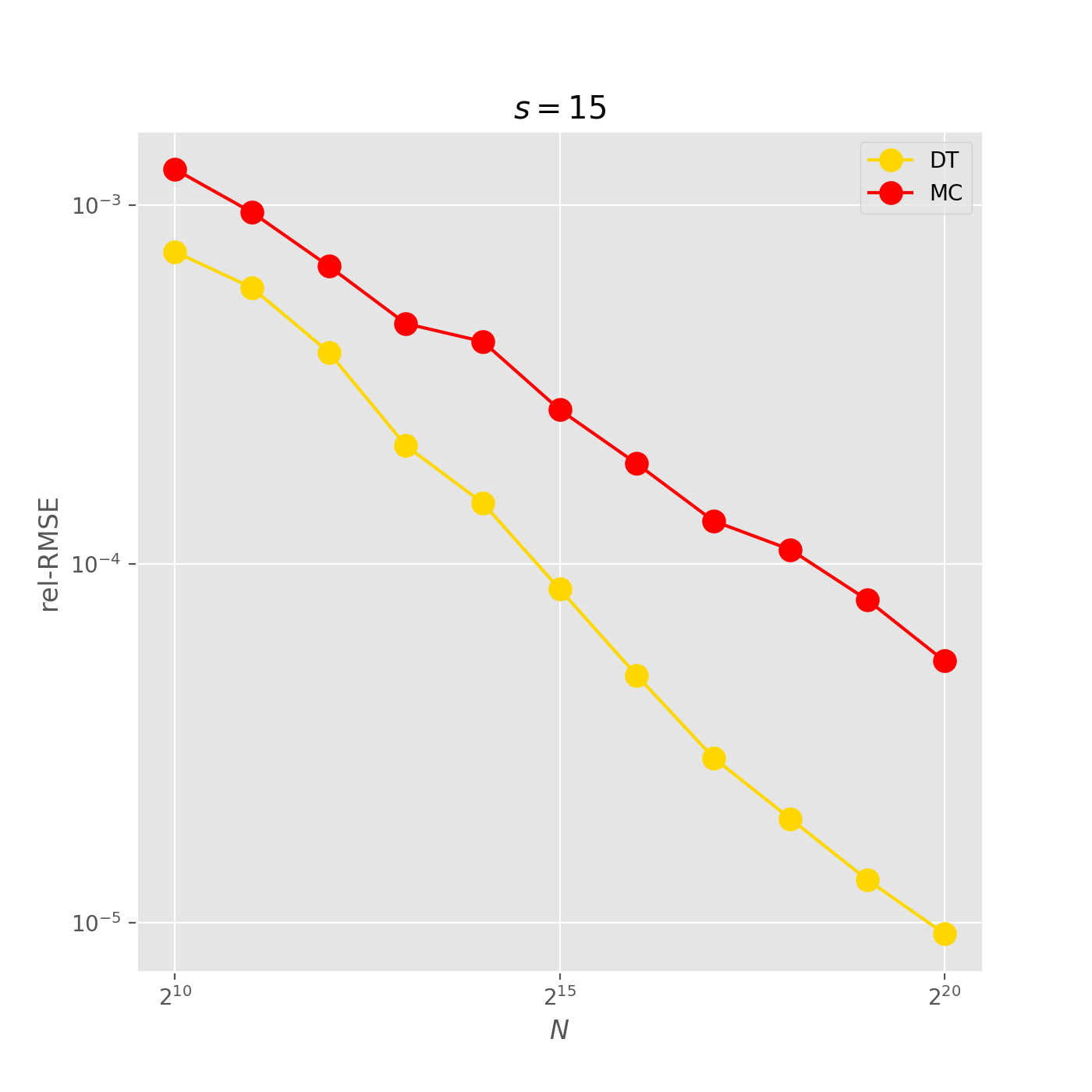}
  \includegraphics[width=0.39\textwidth]{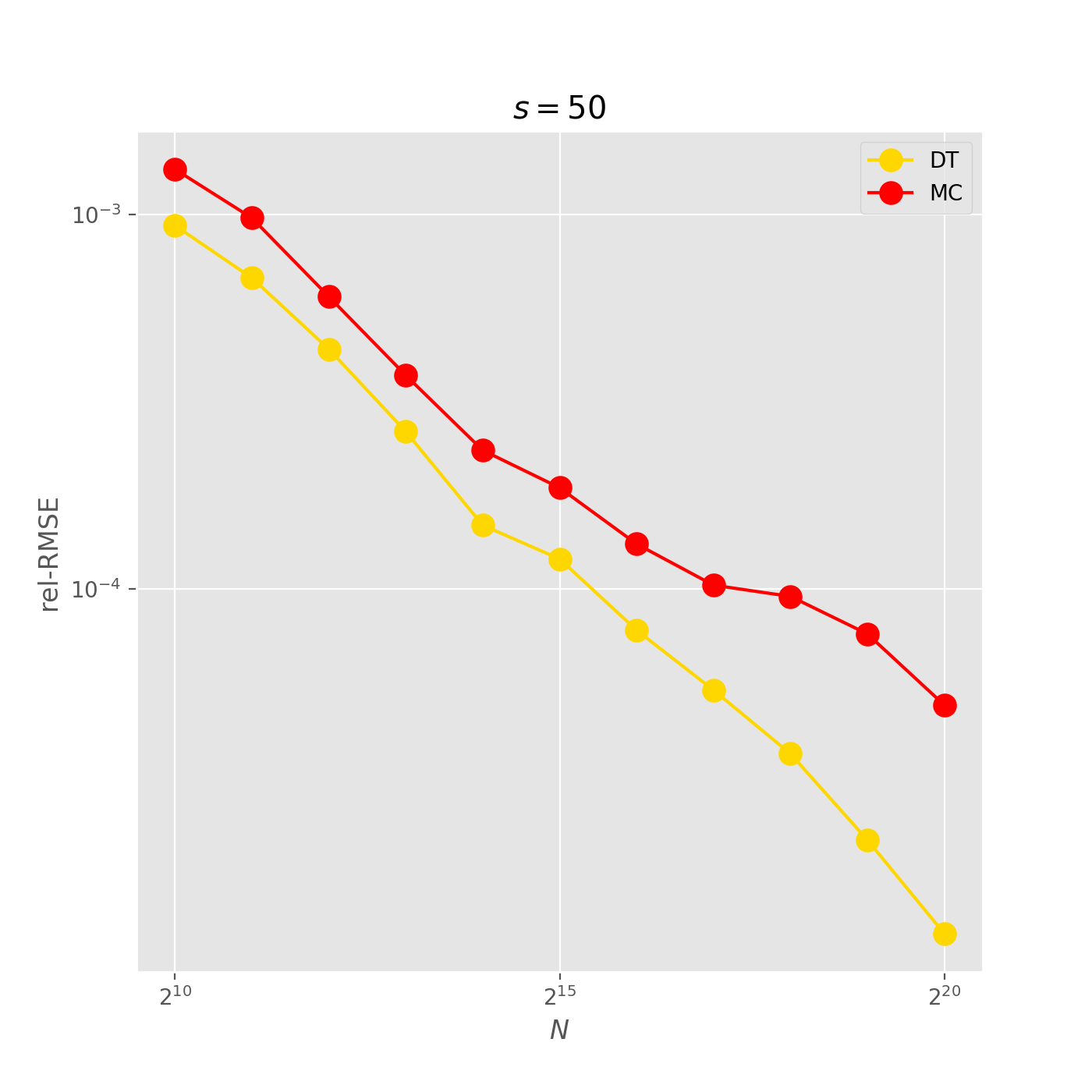}
  \caption{Toy example, $s=15$ (left) and $s=50$ (right). Same plot as
    \cref{fig:toy_rmse} except Haber estimator of order one has been excluded,
    and the theoretical convergence rates are not plotted for
    the sake of readability. }
    \label{fig:toy_rmse_higher_dim}
\end{figure}

One advantage of stratification that we have not discussed in the body of the
paper is that  it is easy to estimate the variance of a stratified estimator,
by generating two (instead of one) point in each stratum. The variance
estimate is the sum (over the strata) of the empirical variance within each
stratum. This idea was discussed in relation to Haber and related estimators in
\cite{strat}.  \Cref{fig:toy_boxplots_var} compares box-plots of such variance
estimates when $N=2^{10}$ and $s=5$ with the variance estimate of standard
Monte Carlo. One can see that one obtains variance estimates that are far more
stable. 

\begin{figure}[ht] \centering 
  \includegraphics[width=0.6\textwidth]{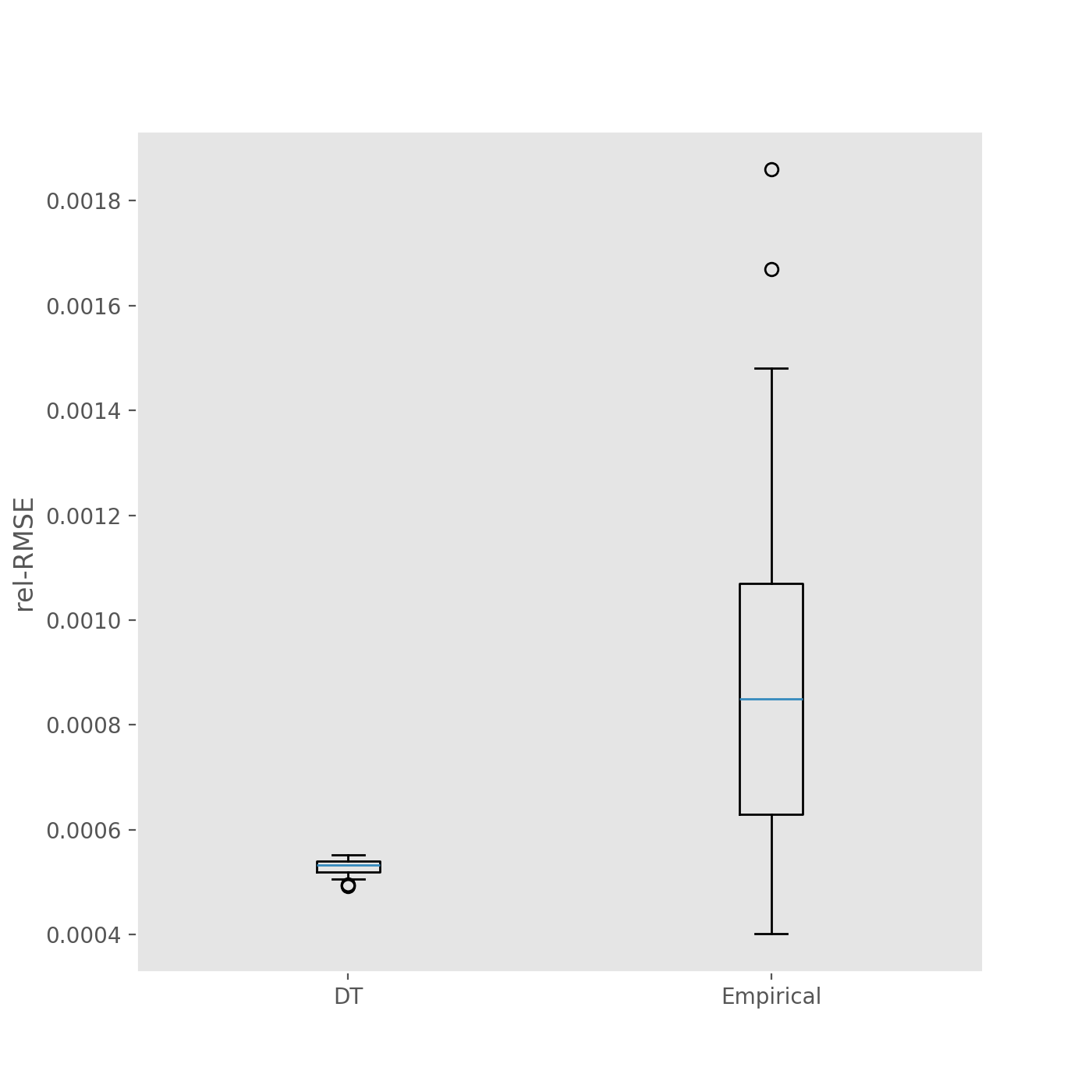} 
  \caption{Toy example, $s=5$, $N=2^{10}$. Box-plots of variance estimates for
  the Monte Carlo estimator and our adaptive stratified estimator.}
  \label{fig:toy_boxplots_var}
\end{figure}

\subsection{Bayesian model choice}

We consider a Bayesian statistical model, with parameter $\beta\in \R^d$, prior
distribution $p(\beta)$, and likelihood $L(y | \beta)$. We wish to approximate
unbiasedly  the marginal likelihood $p(y) = \int p(\beta)L(y|\beta)\,d\beta$.
This quantity may be used to perform model choice.  We adapt the importance
sampling approach described in \cite{mr3634307} to  approximate marginal
likelihoods as follows: we obtain numerically the mode $\hat\beta$, and the
Hessian at $\beta=\hat\beta$, of the function $h(\beta)=\log\{
p(\beta)L(y|\beta)\}$; hence  $h(\beta)\approx h(\hat\beta)- (1/2)
(\beta-\hat\beta)^T H(\beta -\hat\beta)$. Then we set the importance density to
$q(\beta) = N(h(\hat\beta),H^{-1})$. Finally, we rewrite the integral 
\[ p(y) = 
  \mathbb{E}_{q}\left[\frac{p(\beta)L(y|\beta)}{q(\beta)}\right] 
  = \int_{[0, 1]^s} f(x) \dd{x}
\]  
where $f(x) = \hat{\beta} + C \Phi^{-1}(x)$, $C C^\top = H^{-1}$ is the
Cholesky decomposition of $H^{-1}$, and $\Phi^{-1}(x)$ is the vector  of
$\Phi^{-1}(x_i)$'s, with $\Phi^{-1}$ being the  the quantile (inverse CDF)
function associated to the $N(0, 1)$ distribution. 

Specifically, we assume a Gaussian prior (with zero mean and covariance $5^2
I_s$), and a logistic regression model:  $L(y|\beta) = \prod_{i=1}^n F(y_i
\beta^T x_i)$, $F(z)=1/(1+e^{-z})$, and the data $(x_i, y_i)_{i=1}^{n}$ consist
of predictors $x_i\in \R^s$ and labels $y_i\in\{-1, 1\}$.  We use the
German credit dataset from the UCI machine learning repository, and the $s$
first predictors, for $s=5$, $10$, $15$, $20$.
Predictors are rescaled to have mean zero and variance one.

Figure \ref{fig:german} compares our estimators with the standard MC (Monte
Carlo) estimator for $s=5$, 10, 15 and 20. Whatever the dimension, we see a
distinctive improvement with our method. 

\begin{figure}
\centering
\includegraphics[width=0.39\textwidth]{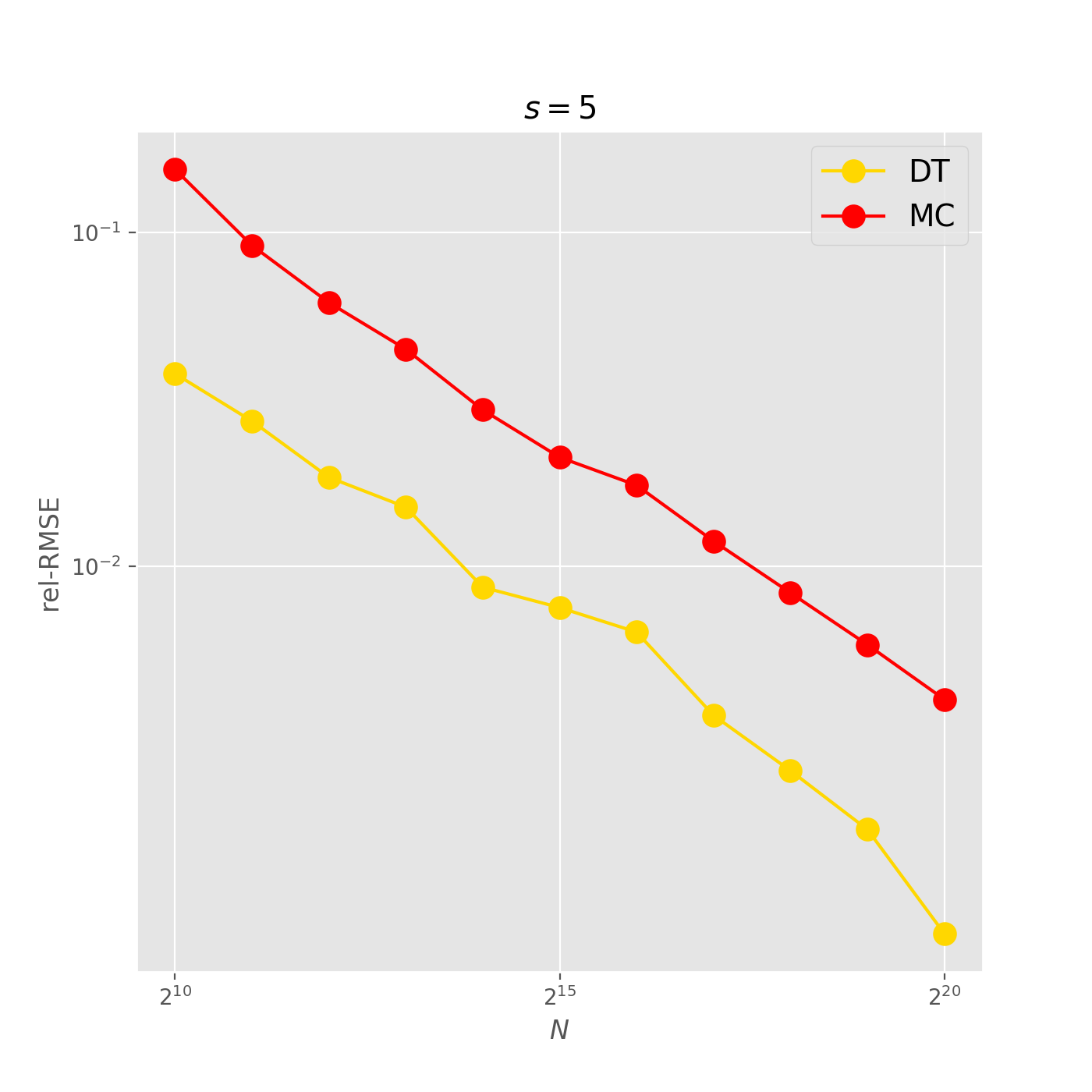}
\includegraphics[width=0.39\textwidth]{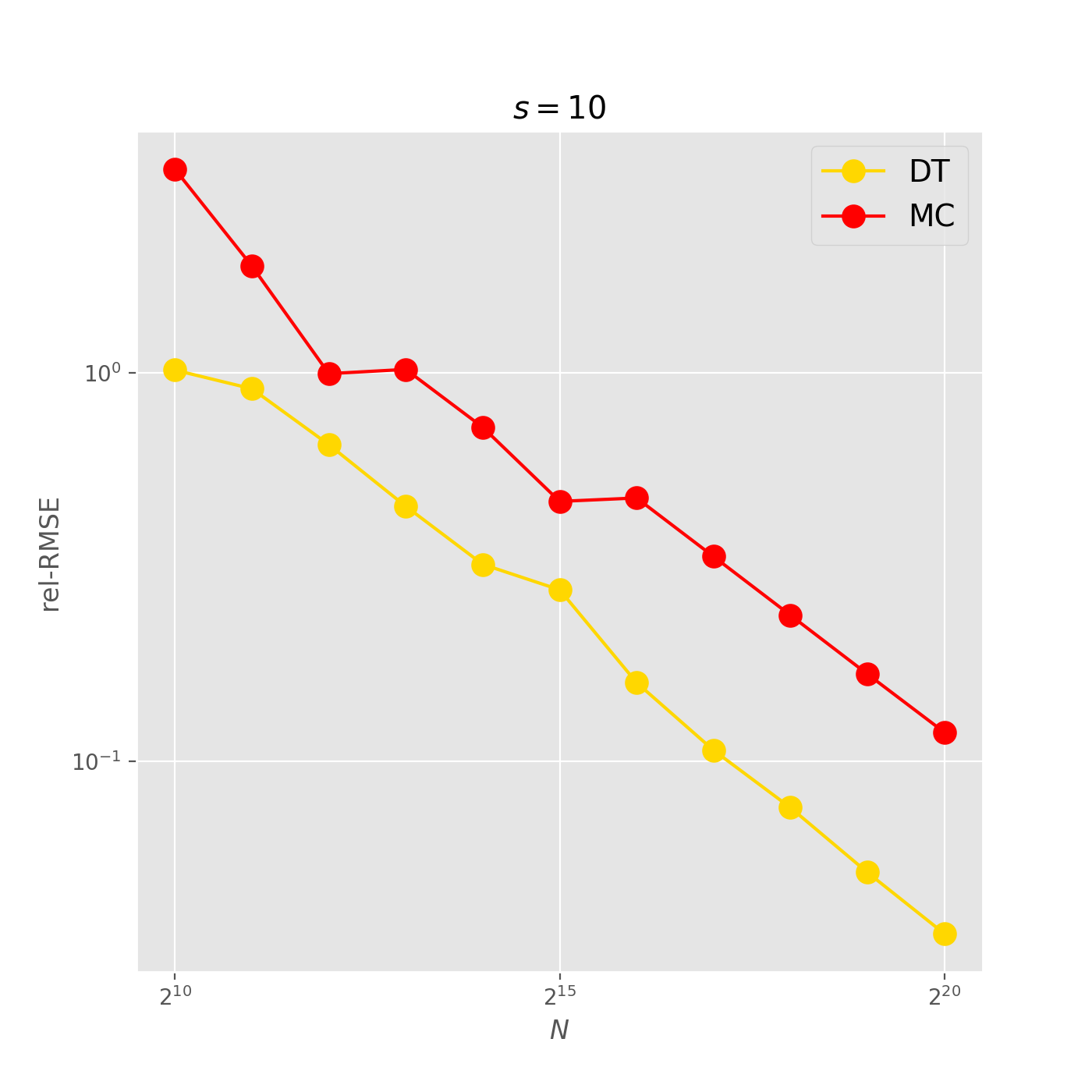}
\includegraphics[width=0.39\textwidth]{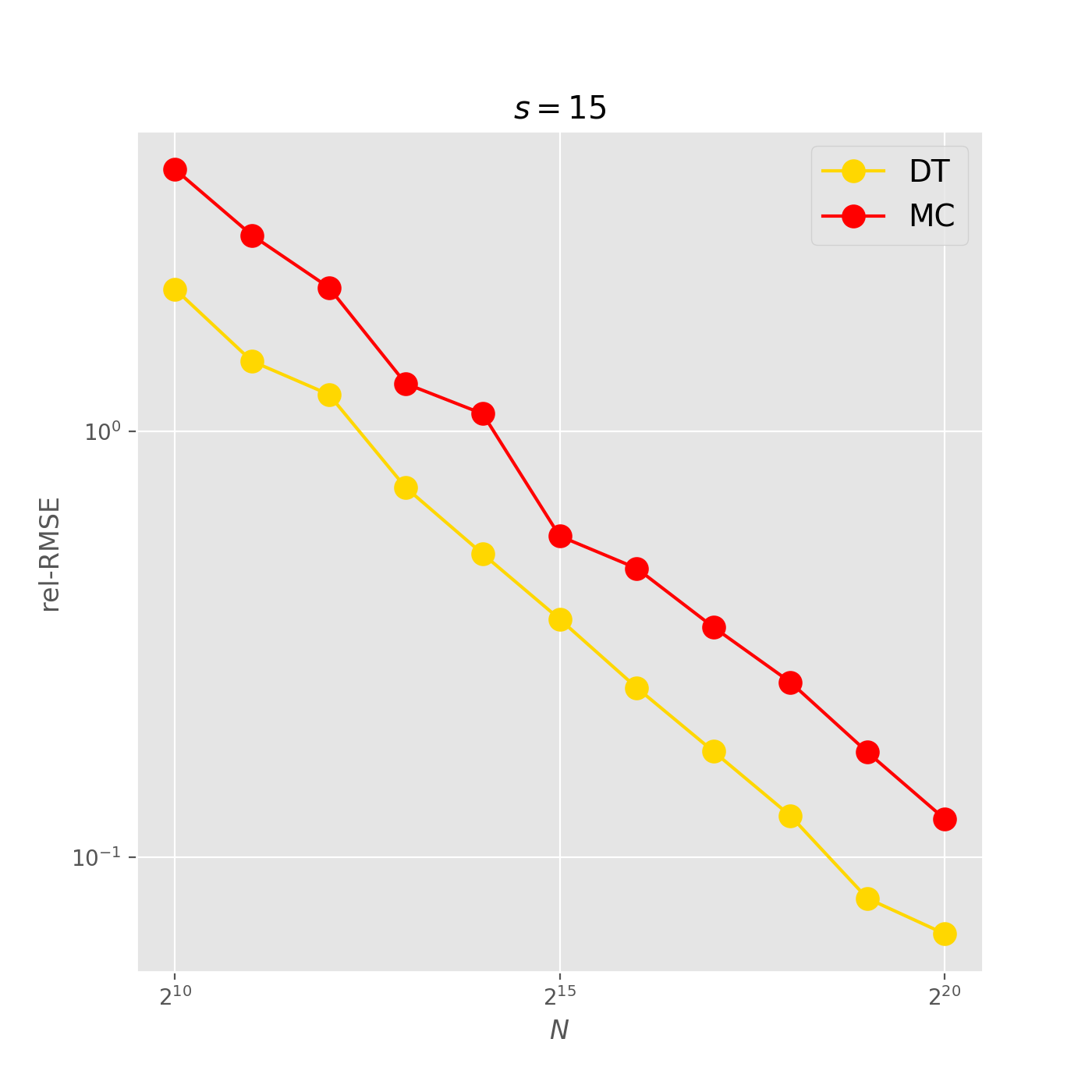}
\includegraphics[width=0.39\textwidth]{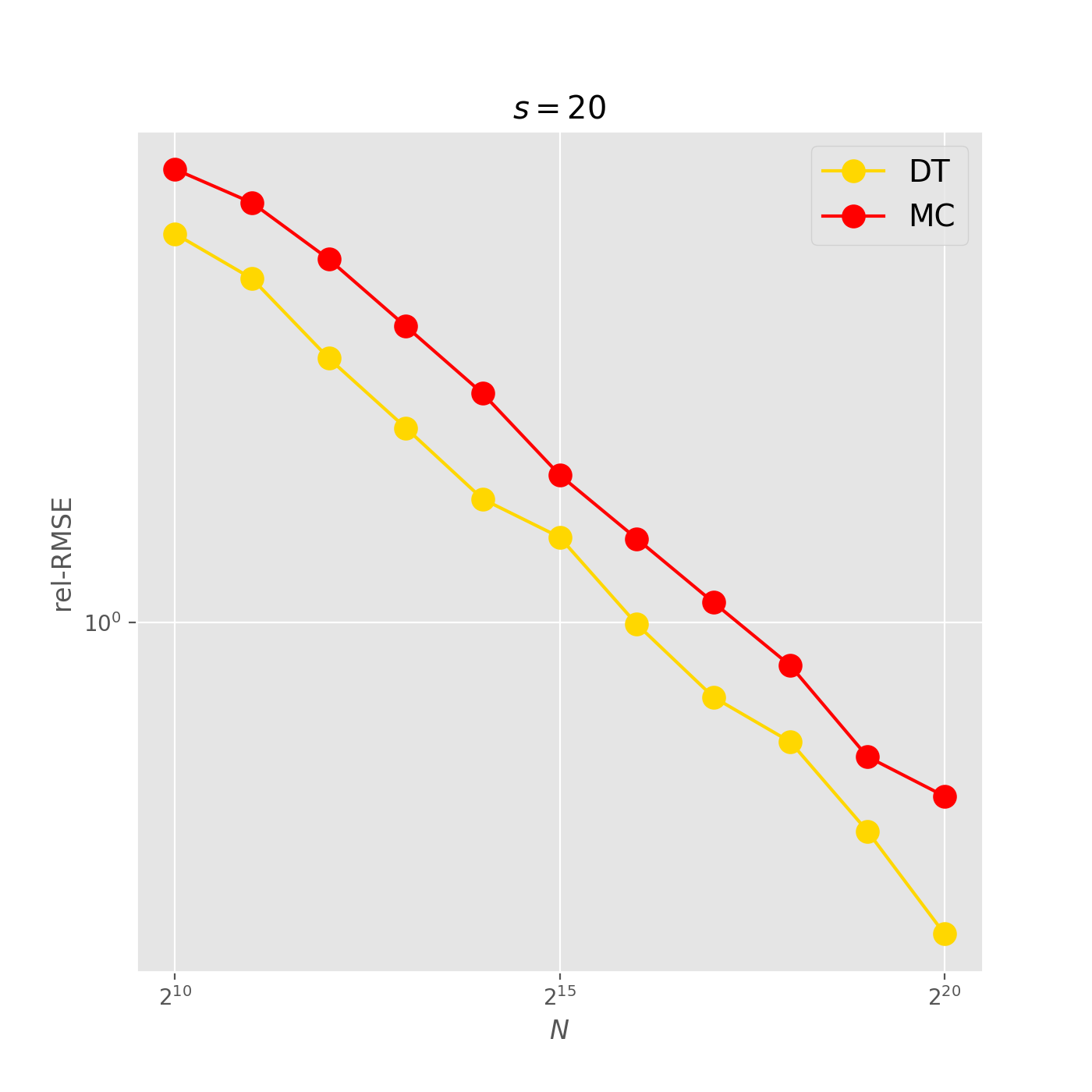}
\caption{Bayesian model choice example, $s=5$, $10$, $15$, $20$: RMSE  (over
  256 independent realisations) versus $N$ for the following unbiased
  estimators: standard Monte Carlo (MC, red), and our adaptive stratified Monte
Carlo estimators (DT, yellow).}
 \label{fig:german}
\end{figure}

\section{Discussion and future work}\label{sec:conclusion}

The strategy developed in this paper represents an important step towards
making stratified Monte Carlo more practical, especially in moderate to high
dimensions. Its main limitation is that it does  not reach the optimal
convergence rate for $\mathcal{C}^r$ functions, even for $r=1$. It remains to
be seen whether one can obtain such rates with a tree-based approach.  More
generally, we suspect that the decision trees and their
generalisations such as random forest may find other uses in Monte Carlo
computation (e.g., control variates), in particular because of the low
complexity of their learning procedures.

\section*{Acknowledgements}\label{sec:Acknowledgements}

The first author is partly supported by ANR project BLISS.
The third author wishes to thank CREST for supporting her visit to the first
author during her PhD.

\bibliographystyle{apalike}
\bibliography{complete}

\appendix

\section*{Proofs}\label{sec:proof}

\subsection*{Proof of \cref{thm:linear}}

We consider a sequence of stratified estimators, $\Ih_0, \Ih_1, \dots$, which
corresponds to a sequence of sets of rectangles $\mathcal{R}_0, \mathcal{R}_1,
	\dots$, starting at $\mathcal{R}_0=\{[0, 1]^s\}$ (hence $\Ih_0=\Iest{MC}$)
and corresponding to the successive splitting operations performed while
growing the tree, as described in \cref{sub:tree}: that is, at each iteration,
takes a rectangle $R$ in $\mathcal{R}_t$ and replace it by two rectangles,
$R[j^\star]^+$ and  $R[j^\star]^-$, where $j^\star$ is the index that leads to
greatest variance reduction.

These stratified estimators $\Ih_0$, $\Ih_1$, $\dots$ are defined as
\[
	\Ih_t = \frac 1 N \sum_{R\in\mathcal{R}_t} \sum_{n=1}^{n_R} f(X_{p,n}),
	\qquad X_{p, n} \sim \Unif(R_p),
\]
where $n_R=N \times \vol(R)$, hence their variance has the form:
\[
	\frac{1}{N} \sum_{R\in\mathcal{R}_t} \vol(R) \times
	\Delta(R)
\]
where $\Delta(R) = \var\left[f(X) \middle| X \in R\right]$, where the
conditional distribution is taken with respect to $X\sim\Unif([0, 1])$.
Thus, splitting a particular rectangle amounts to replacing one term in this
expression by a sum of two terms, leading to the variance reduction factor:
\begin{equation}\label{eq:var_reduction}
	\frac{\Delta(R[j^\star]^+) + \Delta(R [j^\star]^-)}{2\Delta(R)}.
\end{equation}

Elementary calculations show that, if $f(x)=\lambda^T x$, and rectangle $R$ has
dimensions $\mu_1\times\dots\times \mu_s$, then $\Delta(R) = \sum_{j=1}^s
	\mu_j^2\lambda_j^2/12$. In that case, ratio~\eqref{eq:var_reduction} equals:
\[
	\frac{\Delta(R[j^\star]^+) + \Delta(R [j^\star]^-)}{2\Delta(R)} =
	1 - \frac{3 \mu_{j^\star}^2\lambda_{j^\star}^2}{4 \sum_{j=1}^s \mu_j^2\lambda_j^2}
	\leq 1 - \frac{3}{4s_0}.
\]
The inequality comes from $j^\star = \arg\min_{j=1,\dots,s}
	\{\Delta(R[j]^+) + \Delta(R[j]^-)\}$, and the fact that the maximum of $s$
number is larger than their average.

At the end of the splitting process, where one has obtained $N$ rectangles of
volume $1/N$ (recall that $d=k$ hence $P=2^d=2^k=N$), we get
\[
	\var[\Iest{oracle}]  \leq \left(1 - \frac{3}{4s_0}\right)^k \var[\Ih_0]
\]
with $\var[\Ih_0] = \OO(N^{-1})$, and, since $N=2^k$,
\[
	\left(1 - \frac{3}{4s_0}\right)^k =
	\exp\left\{ \log(N) \times \frac{\log(1-3/4s_0)}{\log 2} \right\}
	= N^{-2\alpha(s_0)}
\]
with $\alpha(s_0) = -\log(1 - 3/4s_0)/(2\log 2) \geq 3/(8 s_0 \log 2)$,
since $\log(1 + x) \leq x$.

\subsection*{Proof of \cref{thm:oracle_cls}}

We follow the same lines as in the previous proof: given a rectangle $R$, we
wish to bound variance reduction factor~\eqref{eq:var_reduction}. We use the
law of total variance formula to get (for any $j\in 1:s$):
\begin{equation}
	\Delta(R) = \var \left[ f(X) \middle| X \in R \right] = C_1 + C_2
\end{equation}
with
\begin{align*}
	C_1 & := \frac 1 4 \Big\{ \E\left[f(X) \middle| X \in R[j]^+\right] -
	\E\left[f(X) \middle| X \in R[j]^-\right] \Big\}^2,                         \\
	C_2 & := \frac 1 2 \var\left[f(X) \middle| X \in R[j]^+\right]
	+  \frac 1 2 \var\left[f(X) \middle| X \in R[j]^-\right].
\end{align*}

To bound the first term, we remark that (assuming the dimensions of $R$ are
$\mu_1\times \dots\times \mu_s$):
\begin{multline*}
	\E\left[f(X) \middle| X \in R[j]^+\right]
	- \E\left[f(X) \middle| X \in R[j]^-\right] \\
	= \E\left[\left\{f(X+\frac{\mu_{j}}{2} e_{j}) - f(X) \right\} \middle| X \in R[j]^-\right]
	\geq \frac{\alpha_i \mu_{j}}{2}
\end{multline*}
hence $C_1 \geq \alpha_j^2 \mu_j^2 / 16$.

To bound the second term, we remark that:
\begin{align*}
	\var\left[f(X) \middle| X \in R \right]
	 & = \frac{1}{2} \E\left[ \left\{ f(X') - f(X) \right\}^2 \middle| X, X' \in R \right]
	\leq \frac{1}{12} \sum_{i=1}^s \beta_i^2 \mu_i^2. 
\end{align*}
If we apply this inequality to $R[j]^+$ and $R[j]^-$, we obtain:
\begin{align*}
	C_2 &\leq    \frac{1}{12}  \sum_{i=1}^s \beta^2_i\mu_i^2 =\frac{\alpha^2_j\mu^2_j}{12}  \sum_{i=1}^s \frac{\beta^2_i}{\alpha^2_i}\,\frac{\alpha_i^2\mu_i^2}{\alpha^2_j\mu^2_j}.
\end{align*}

Then, we can bound~\eqref{eq:var_reduction} as follows:
\begin{align*}
	\frac{\Delta(R[j]^+) + \Delta(R [j]^-)}{2\Delta(R)}
	= \frac{C_2}{C_1 + C_2}
	 &\leq \frac{\frac{\alpha^2_j\mu^2_j}{12}  \sum_{i=1}^s \frac{\beta^2_i}{\alpha^2_i}\,\frac{\alpha_i^2 \mu_i^2}{\alpha^2_j\mu^2_j}}{\frac{\alpha_j^2 \mu_j^2}{ 16}+\frac{\alpha^2_j\mu^2_j}{12}  \sum_{i=1}^s \frac{\beta^2_i}{\alpha^2_i}\,\frac{\alpha_i^2 \mu_i^2}{\alpha^2_j\mu^2_j}}=\frac{\frac{4}{3}  \sum_{i=1}^s \frac{\beta^2_i}{\alpha^2_i}\,\frac{\alpha_i^2 \mu_i^2}{\alpha^2_j\mu^2_j }}{1 +\frac{4}{3}  \sum_{i=1}^s \frac{\beta^2_i}{\alpha^2_i}\,\frac{\alpha_i^2 \mu_i^2}{\alpha^2_j\mu^2_j}}.
\end{align*}
This inequality holds for all $j=1,\dots, s$, and in particular it holds for
$j^\star = \arg\min\{ \Delta(R[j]^+) + \Delta(R[j]^-)$. Therefore, noting that if
we let $k=\mathrm{argmax}_{j\in\{1,\dots,s\}}\alpha_j^2\mu_j^2$ we have
\begin{align*}
\frac{\frac{4}{3}  \sum_{i=1}^s \frac{\beta^2_i}{\alpha^2_i}\,\frac{\alpha_i^2 \mu_i^2}{\alpha^2_j\mu^2_j }}{1 +\frac{4}{3}  \sum_{i=1}^s \frac{\beta^2_i}{\alpha^2_i}\,\frac{\alpha_i^2 \mu_i^2}{\alpha^2_j\mu^2_j}}\geq \frac{\frac{4}{3}  \sum_{i=1}^s \frac{\beta^2_i}{\alpha^2_i}\,\frac{\alpha_i^2 \mu_i^2}{\alpha^2_{k}\mu^2_{k} }}{1 +\frac{4}{3}  \sum_{i=1}^s \frac{\beta^2_i}{\alpha^2_i}\,\frac{\alpha_i^2 \mu_i^2}{\alpha^2_{k}\mu^2_{k}}},\quad\forall j\in\{1,\dots,s\}
\end{align*}
it follows that 
\begin{align*}
\frac{\Delta(R[j^\star]^+) + \Delta(R [j^\star]^-)}{2\Delta(R)}&\leq  \frac{\frac{4}{3}  \sum_{i=1}^s \frac{\beta^2_i}{\alpha^2_i}\,\frac{\alpha_i^2 \mu_i^2}{\alpha^2_{k}\mu^2_{k} }}{1 +\frac{4}{3}  \sum_{i=1}^s \frac{\beta^2_i}{\alpha^2_i}\,\frac{\alpha_i^2 \mu_i^2}{\alpha^2_{k}\mu^2_{k}}}\leq \frac{\frac{4}{3}  \sum_{i=1}^s \frac{\beta^2_i}{\alpha^2_i}}{1 +\frac{4}{3}  \sum_{i=1}^s \frac{\beta^2_i}{\alpha^2_i}}
\end{align*}
where the second inequality uses the fact that $(\alpha_i^2\mu_i^2)/(\alpha_{k}^2\beta_{k}^2)\leq 1$ for all $i\in\{1,\dots,s\}$ and the fact that the function $x\mapsto x/(1+x)$ is non-decreasing on $[0,\infty)$.

We can  now conclude in the same way as in the previous proof:
\begin{align}
	\var\left[\Iest{oracle}\right]
	 & \leq \left(\frac{ \sum_{i=1}^s \frac{\beta^2_i}{\alpha^2_i}}{\frac{3}{4} + \sum_{i=1}^s \frac{\beta^2_i}{\alpha^2_i}} \right)^k
     \var\left[\Iest{MC}\right] \label{eq:rate_oracle}\\
	 & = \OO(N^{-1-2r(s)}) \nonumber
\end{align}
with 
\begin{align*}
r(s)=\frac{1}{2\log 2}\log\left(\frac{\frac{3}{4} + \sum_{i=1}^s \frac{\beta^2_i}{\alpha^2_i}}{\sum_{i=1}^s \frac{\beta^2_i}{\alpha^2_i}}\right).
\end{align*}

\subsection*{Proof of \cref{thm:estimated_tree}}

We consider some fixed $\varepsilon\in(0, 1)$ throughout.

We first recall a concentration inequality due to
\cite{MaurerPontil2009Empirical}.
\begin{prop}[Theorem 10 in \citealp{MaurerPontil2009Empirical}]
	Let $Z_1,\ldots, Z_N$ be IID random variables taking values in $[0, 1]$,
	$\Delta=\var(Z)$, and
	\[ \widehat{\Delta} = \frac{1}{N-1}\sum_{n=1}^N (Z_n - \bar{Z})^2,
	\]
	the empirical variance. Then:
	\begin{align}
		\pr\left( \widehat{\Delta}^{1/2} > \Delta^{1/2} + \varepsilon \right)
		 & \leq \exp\left\{ - \frac{(N-1)\varepsilon^2}{2} \right\}, \label{eq:pontil_geq} \\
		\pr\left( \widehat{\Delta}^{1/2} < \Delta^{1/2} - \varepsilon \right)
		 & \leq \exp\left\{ - \frac{(N-1)\varepsilon^2}{2} \right\}. \nonumber
	\end{align}
\end{prop}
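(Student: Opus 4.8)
The plan is to regard the empirical standard deviation $\widehat{\Delta}^{1/2}$ as a function of the sample vector $Z=(Z_1,\dots,Z_N)\in[0,1]^N$ and to exploit its geometric structure. Writing $P = I_N - \frac 1 N \mathbf{1}\mathbf{1}^\top$ for the orthogonal projection onto the zero-sum hyperplane $\{z:\ \sum_n z_n = 0\}$, one has $(N-1)\widehat{\Delta} = \sum_n (Z_n-\bar Z)^2 = \|PZ\|^2$, so that
\[
	\Sigma(Z) := \widehat{\Delta}^{1/2} = \frac{1}{\sqrt{N-1}}\,\|PZ\|,
\]
the rescaled Euclidean distance from $Z$ to the diagonal line $\{\mu\mathbf{1}:\mu\in\R\}$. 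In particular $\Sigma$ is \emph{convex}, and its gradient $\nabla\Sigma(z) = (N-1)^{-1/2}\,Pz/\|Pz\|$ has constant norm $(N-1)^{-1/2}$ wherever $Pz\neq 0$. This last fact is what will drive the exact constant in the exponent.

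First I would establish sub-Gaussian concentration of $\Sigma$ around its \emph{mean}, with variance proxy $(N-1)^{-1}$, that is $\pr(\Sigma > \E[\Sigma] + t) \leq \exp\{-(N-1)t^2/2\}$ together with the matching lower-tail bound. For this I would use the entropy method for functions of independent bounded variables (equivalently, the exponential Efron--Stein inequality): because $\sum_n (\partial_n\Sigma)^2 = \|\nabla\Sigma\|^2 = (N-1)^{-1}$ \emph{identically} and each coordinate ranges in an interval of length one, the relevant Efron--Stein variance proxy is of order $(N-1)^{-1}$, which feeds through Herbst's argument to yield the exponent $(N-1)t^2/2$. Plain McDiarmid is \emph{not} enough here: its worst-case bounded differences only give an exponent of order $t^2$, missing the $1/N$ scale of the fluctuations, so it is essential to use the expected squared increments rather than the worst-case ones.

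It then remains to replace the centering $\E[\Sigma]$ by the true standard deviation $\Delta^{1/2}$. For the upper tail this is immediate: since $\widehat{\Delta}$ is unbiased, Jensen gives $\E[\Sigma] \leq (\E[\Sigma^2])^{1/2} = (\E[\widehat{\Delta}])^{1/2} = \Delta^{1/2}$, whence $\{\Sigma > \Delta^{1/2}+\varepsilon\}\subseteq\{\Sigma > \E[\Sigma]+\varepsilon\}$ and \eqref{eq:pontil_geq} follows cleanly. The lower tail is genuinely more delicate, because Jensen now points the wrong way: one has $\Delta^{1/2}\geq\E[\Sigma]$, so the target event $\{\Sigma<\Delta^{1/2}-\varepsilon\}$ is \emph{larger} than $\{\Sigma<\E[\Sigma]-\varepsilon\}$, and a naive recentering that absorbs the nonnegative gap $\Delta^{1/2}-\E[\Sigma] = \var(\Sigma)/(\Delta^{1/2}+\E[\Sigma])$ into $\varepsilon$ would degrade the constant. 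I would therefore obtain the lower tail by running the entropy-method argument directly against the centering $\Delta^{1/2}$, rather than by shifting a mean-centered bound.

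I expect the main obstacle to be exactly this interplay between the sharp constant and the centering at $\Delta^{1/2}$ rather than at $\E[\Sigma]$: the entropy-method computation must be carried out carefully enough to keep the variance proxy at precisely $(N-1)^{-1}$ (any slack spoils the clean exponent), and the lower-tail recentering must be reconciled with this constant uniformly over all values of $\Delta$, including the degenerate regime $\Delta\to 0$ where one should observe that the bound is vacuous unless $\varepsilon\leq\Delta^{1/2}$. This two-sided bookkeeping is what makes the statement nontrivial, and is the reason the result is best invoked as the established theorem of \citet{MaurerPontil2009Empirical} rather than reproved in full here.
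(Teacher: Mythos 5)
The paper offers no proof of this proposition: it is imported verbatim as Theorem~10 of \citet{MaurerPontil2009Empirical}, so your closing instinct (cite, don't reprove) is precisely what the authors do, and the comparison must be with the source's proof. Half of your sketch survives that comparison. For the upper tail, your route is sound: $\Sigma(Z)=(N-1)^{-1/2}\|PZ\|$ is separately convex and $(N-1)^{-1/2}$-Lipschitz in the Euclidean norm, and the entropy-method bound for separately convex Lipschitz functions of independent $[0,1]$-valued variables (Boucheron--Lugosi--Massart) gives exactly $\pr\left(\Sigma > \E[\Sigma] + t\right) \leq \exp\{-(N-1)t^2/2\}$; combined with $\E[\Sigma]\leq \Delta^{1/2}$ by Jensen, this yields \eqref{eq:pontil_geq} with the clean constant.

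The genuine gap is the lower tail. The convexity/entropy-method mechanism is intrinsically one-sided: separate convexity controls the squared increments only in the direction yielding the upper tail, and no matching lower-tail bound with variance proxy $(N-1)^{-1}$ follows from your constant-gradient-norm computation (the Talagrand convex-distance route gives a lower tail only around the median and with worse constants). Your proposed fix --- ``running the entropy-method argument directly against the centering $\Delta^{1/2}$'' --- is not an argument: Herbst's argument produces concentration around the mean $\E[\Sigma]$, and, as you yourself observe, $\Delta^{1/2}-\E[\Sigma]>0$, so the target event is strictly larger. The missing idea, and the one \citet{MaurerPontil2009Empirical} actually use, is that $W:=(N-1)\widehat{\Delta}=\frac{1}{N}\sum_{i<j}(Z_i-Z_j)^2$ is a \emph{self-bounding} function of $(Z_1,\dots,Z_N)$, so the sharp lower-tail inequality $\pr\left(W \leq \E[W]-t\right)\leq \exp\{-t^2/(2\E[W])\}$ holds, with variance proxy proportional to the mean. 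The square-root substitution then does all the work: for $0\leq s\leq \sqrt{\E[W]}$, the event $\{\sqrt{W}\leq \sqrt{\E[W]}-s\}$ is $\{W\leq \E[W]-t\}$ with $t=2s\sqrt{\E[W]}-s^2\geq s\sqrt{\E[W]}$, giving the bound $\exp(-s^2/2)$, i.e. the second inequality with $s=\sqrt{N-1}\,\varepsilon$ (and a vacuous event when $\varepsilon>\Delta^{1/2}$, as you anticipated). Crucially, the centering lands at $\sqrt{\E[W]}=\sqrt{(N-1)\Delta}$, i.e. at the true $\Delta^{1/2}$, precisely because $\widehat{\Delta}$ is unbiased: the recentering difficulty you identify as the main obstacle is an artifact of centering at $\E[\Sigma]$ and simply never arises on the self-bounding route. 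In short, it is the self-bounding structure (variance proxy $\E[W]$), not the constant gradient norm, that delivers the exact constant in the delicate tail.
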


A simple corollary of the above result is:
\begin{lem}\label{lem:ratio_var}
	Let $Z_1,\dots,Z_N$ IID variables taking values in some compact interval $[a,
				b]$, $\Delta$, $\widehat{\Delta}$ be defined as above, and
	$\gamma:= (1 + \varepsilon)^{1/2} - 1$.
	Then (assuming $\Delta>0$):
	\begin{align*}
		\pr\left( \frac{\widehat{\Delta}}{\Delta} > 1 + \varepsilon \right)
		 & \leq \exp\left\{ - \frac{(N-1)\Delta\gamma^2}{2(b-a)^2} \right\},  \\
		\pr\left( \frac{\widehat{\Delta}}{\Delta} < 1 - \varepsilon \right)
		 & \leq \exp\left\{ - \frac{(N-1)\Delta\gamma^2}{2 (b-a)^2} \right\}.
	\end{align*}
\end{lem}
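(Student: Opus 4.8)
The plan is to deduce this lemma from the Maurer--Pontil inequality (the preceding Proposition) by two preliminary reductions: a rescaling to the unit interval, so that the Proposition becomes applicable, and a reformulation of the two tail events in terms of standard deviations rather than variances, so that the definition of $\gamma$ comes into play.

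First I would rescale. Setting $Z_n' = (Z_n - a)/(b-a)$, which takes values in $[0,1]$, the true and empirical variances transform as $\var(Z') = \Delta/(b-a)^2$ and $\widehat{\Delta}' = \widehat{\Delta}/(b-a)^2$, so that the ratios $\widehat{\Delta}/\Delta$ and $\widehat{\Delta}'/\Delta'$ coincide. This lets me apply the Proposition to the $Z_n'$, at the cost of carrying the factor $(b-a)$ through the exponent. For the upper tail I would then rewrite the event $\{\widehat{\Delta}/\Delta > 1+\varepsilon\}$ as $\{(\widehat{\Delta}')^{1/2} > (\Delta')^{1/2}(1+\varepsilon)^{1/2}\}$, and, since $(1+\varepsilon)^{1/2} = 1+\gamma$ by the definition of $\gamma$, as $\{(\widehat{\Delta}')^{1/2} > (\Delta')^{1/2} + (\Delta')^{1/2}\gamma\}$. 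Applying the first inequality of the Proposition with $\varepsilon$ replaced by $(\Delta')^{1/2}\gamma$, and substituting $(\Delta')^{1/2} = \Delta^{1/2}/(b-a)$ into the resulting exponent, gives exactly the claimed bound.

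The lower tail is where the only genuine subtlety lies. The event $\{\widehat{\Delta}/\Delta < 1-\varepsilon\}$ corresponds to $\{(\widehat{\Delta}')^{1/2} < (\Delta')^{1/2}(1-\varepsilon)^{1/2}\}$, and $(1-\varepsilon)^{1/2}$ is \emph{not} equal to $1-\gamma$, so the second inequality of the Proposition cannot be invoked verbatim with the same $\gamma$. To reuse the same $\gamma$ (and hence obtain a matching exponent), I would establish $(1-\varepsilon)^{1/2} \leq 1-\gamma$, which rearranges to $(1-\varepsilon)^{1/2} + (1+\varepsilon)^{1/2} \leq 2$; this follows at once from the concavity of the square root applied to the points $1-\varepsilon$ and $1+\varepsilon$ (whose midpoint is $1$). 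Since $\gamma < 1$ for $\varepsilon \in (0,1)$, the threshold $1-\gamma$ is positive, and the lower-tail event is therefore contained in $\{(\widehat{\Delta}')^{1/2} < (\Delta')^{1/2} - (\Delta')^{1/2}\gamma\}$; the second inequality of the Proposition, again with $\varepsilon$ replaced by $(\Delta')^{1/2}\gamma$, yields the stated bound.

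The main obstacle is precisely this last reduction: the quantity $\gamma$ is tuned to the upper tail (where $(1+\varepsilon)^{1/2}=1+\gamma$ exactly), so making it serve the lower tail as well requires the elementary but essential concavity inequality $(1-\varepsilon)^{1/2} + (1+\varepsilon)^{1/2} \leq 2$. Everything else is routine bookkeeping of the rescaling factor $(b-a)$ and a rewriting of variance ratios as standard-deviation differences.
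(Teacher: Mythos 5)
Your proof is correct and takes essentially the same route as the paper: rescale to $Z_n' = (Z_n-a)/(b-a) \in [0,1]$, rewrite the ratio event as a standard-deviation event, and apply the Maurer--Pontil proposition with $\varepsilon$ replaced by $(\Delta')^{1/2}\gamma$ where $(\Delta')^{1/2} = \Delta^{1/2}/(b-a)$. The paper's proof only spells out the upper tail and leaves the lower tail implicit; your concavity observation $(1-\varepsilon)^{1/2} + (1+\varepsilon)^{1/2} \leq 2$, which gives $(1-\varepsilon)^{1/2} \leq 1-\gamma$ and hence the event inclusion needed to reuse the same $\gamma$, is precisely the missing detail, so your write-up is in fact slightly more complete than the paper's.
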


\begin{proof}
	Inequality $\widehat{\Delta} > \Delta (1 + \varepsilon)$ is equivalent to
	\[
		\widehat{\Delta}^{1/2} > \Delta^{1/2} + \Delta^{1/2 } \{ (1 +
		\varepsilon)^{1/2} - 1 \}
	\]
	the probability of which may be bounded by applying \cref{eq:pontil_geq} to
	the rescaled variables $Z_n' = (Z_n - a) / (b - a)$, and replacing
	$\varepsilon$ with $\Delta^{1/2} \gamma$, with $\gamma=(1+\varepsilon)^{1/2}
		- 1$.
\end{proof}

We can now apply this lemma to the set of $f(X_n)$ for those $X_n$ that fall in
a given rectangle $R$.

\begin{lem}\label{lem:rectangle_bound}
	Let $R$ a rectangle included in $[0, 1]^s$, of volume $2^{-l}$. Then, under the same conditions as in \cref{thm:oracle_cls},
	\begin{equation*}
		\pr\left( \frac{\widehat{\Delta}(R)}{\Delta(R)} \geq 1 + \varepsilon \right)
		\leq e^\kappa
		\left[1 - 2^{-l} (1- e^{-\kappa}) \right]^N
	\end{equation*}
	where $\kappa :=\gamma^2\alpha^2/12\beta^2$, $\gamma= (1 + \varepsilon)^{1/2} - 1$,  $\alpha=\inf_{i\in\{1,\dots,s\}}\alpha_i$ and $\beta\max_{i\in\{1,\dots,s\}}\beta_u$. This inequality also holds for
	$\pr\left( \frac{\widehat{\Delta}(R)}{\Delta(R)} \leq 1 - \varepsilon
		\right)$.
\end{lem}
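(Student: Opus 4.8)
My plan is to reduce the statement to an application of \cref{lem:ratio_var}, the main subtlety being that the number of sample points $X_n$ falling into $R$ is itself random (it is $\mathrm{Binomial}(N, 2^{-l})$), whereas \cref{lem:ratio_var} is stated conditionally on a \emph{fixed} sample size. First I would condition on the event $\{M = m\}$, where $M := \#\{n : X_n \in R\}$ counts the preliminary points landing in $R$. Conditionally on $M = m$, the values $f(X_n)$ for $X_n \in R$ are IID copies of $f(X) \mid X \in R$, so \cref{lem:ratio_var} applies with $N$ replaced by $m$. The variables $f(X) \mid X \in R$ take values in a compact interval $[a, b]$; crucially I need to control both the range $b - a$ and the true variance $\Delta(R)$ so that the exponent $(m-1)\Delta(R)\gamma^2 / 2(b-a)^2$ can be bounded \emph{uniformly} by a quantity depending only on $\varepsilon$ and the constants $\alpha, \beta$, and not on the geometry of $R$.

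The key step is to bound the ratio $\Delta(R)/(b-a)^2$ from below by a dimension-free constant. Using the two-sided bound on $|f(x) - f(y)|$ from the hypotheses of \cref{thm:oracle_cls}, the range of $f$ over $R$ satisfies $b - a \leq (\sum_i \beta_i^2 \mu_i^2)^{1/2} \leq \beta (\sum_i \mu_i^2)^{1/2}$ where $\mu_i$ are the side-lengths of $R$ and $\beta = \max_i \beta_i$. For the variance, I would use the identity $\Delta(R) = \tfrac12 \E[\{f(X') - f(X)\}^2 \mid X, X' \in R]$ together with the lower Lipschitz bound to get $\Delta(R) \geq \tfrac12 \sum_i \alpha_i^2 \cdot \tfrac16 \mu_i^2 = \tfrac{1}{12}\sum_i \alpha_i^2 \mu_i^2 \geq \tfrac{\alpha^2}{12}\sum_i \mu_i^2$, where $\alpha = \inf_i \alpha_i$. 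Dividing, the ratio $\Delta(R)/(b-a)^2 \geq \alpha^2 / 12\beta^2$ is independent of $R$, so the exponent in \cref{lem:ratio_var} is at most $-(m-1)\gamma^2 \alpha^2 / 24\beta^2$; matching this against the stated $\kappa = \gamma^2 \alpha^2 / 12\beta^2$ fixes the constant (I would double-check the factor of $2$, as the displayed $\kappa$ suggests a slightly tighter accounting than my rough bound, possibly using $\var[f(X)\mid X\in R]=\tfrac{1}{2}\E[\{f(X')-f(X)\}^2]$ directly without a further halving).

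Writing $q := e^{-\kappa}$ so that the conditional bound reads $\pr(\widehat{\Delta}(R)/\Delta(R) \geq 1+\varepsilon \mid M = m) \leq e^{-\kappa(m-1)} = e^\kappa q^m$, I would then take the expectation over $M \sim \mathrm{Binomial}(N, 2^{-l})$:
\[
\pr\left(\frac{\widehat{\Delta}(R)}{\Delta(R)} \geq 1 + \varepsilon\right)
\leq e^\kappa \, \E[q^M]
= e^\kappa \left(1 - 2^{-l} + 2^{-l} q\right)^N
= e^\kappa \left[1 - 2^{-l}(1 - e^{-\kappa})\right]^N,
\]
using the probability generating function of the binomial. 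This is exactly the claimed bound. The lower-tail statement follows identically from the second inequality in \cref{lem:ratio_var}, since the same variance and range bounds apply. The main obstacle is the clean dimension-free lower bound on $\Delta(R)/(b-a)^2$; once that geometric estimate is in place, the randomness in $M$ is handled routinely by the generating-function computation. One technical point worth flagging is the convention $\empvar = +\infty$ when a rectangle holds fewer than two points: on $\{M \leq 1\}$ the conditional bound from \cref{lem:ratio_var} is vacuous, but these terms are absorbed harmlessly since $q^0 = q^1 = $ (at most) $1$ and the generating-function identity already accounts for all values of $M$.
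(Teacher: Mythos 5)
Your proposal is correct and takes essentially the same route as the paper's proof: condition on the binomial count of preliminary points in $R$, apply \cref{lem:ratio_var} with the uniform geometric bounds $\Delta(R)\geq \frac{\alpha^2}{12}\sum_i\mu_i^2$ and $(b-a)^2\leq \beta^2\sum_i\mu_i^2$, then integrate out the count via the binomial probability generating function. The factor-of-two you flag is real but inconsequential --- a careful application of \cref{lem:ratio_var} gives exponent $(m-1)\gamma^2\alpha^2/(24\beta^2)$, so the lemma holds with $\kappa=\gamma^2\alpha^2/(24\beta^2)$ rather than the stated $\gamma^2\alpha^2/(12\beta^2)$ (the paper silently drops the $2$ between the first and second lines of its display), and your explicit treatment of the $M\leq 1$ case (where the bound $e^\kappa e^{-m\kappa}\geq 1$ holds trivially under the $\empvar=+\infty$ convention) correctly fills a point the paper leaves implicit.
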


\begin{proof}
	Let $N(R)=\sum_{n=1}^N \ind\{X_n\in R\}$. Conditional on $N(R)=n$, the $n$
	variates $X_n$ that fall in $R$ follow an uniform distribution with respect to
	$R$. Therefore, from \cref{lem:ratio_var}, one has:
	\begin{align*}
		\pr\left( \frac{\widehat{\Delta}(R)}{\Delta(R)} \geq 1 + \varepsilon
		\middle| N(R) = n \right)
		 & \leq \exp\left\{ - \frac{(n-1)\Delta(R)\gamma^2}{2D(R)} \right\}    \\
		 & \leq \exp\left\{ - (n-1) \frac{\gamma^2\alpha^2}{12\beta^2}\right\}
		\\
		 & \leq e^\kappa \exp\left\{ - n \kappa \right\}
	\end{align*}
	where in the first line, $D(R) = \left(\sup_{x,y\in
			R}|f(y)-f(x)|\right)^2$, and, in the second line, we have used
	\begin{equation}
		\Delta(R) \geq \frac{\alpha^2}{2} \E\left[\|X - Y\|^2 \middle| X, Y\in R\right]
		= \frac{\alpha^2}{12} \sum_{i=1}^s \mu_i^2
	\end{equation}
	and $D(R) \leq \beta^2 \sum_{i=1}^s \mu_i^2$, assuming that $R$ has
	dimensions $\mu_1\times \dots\times \mu_s$.
	Since $N(R)\sim\mathrm{Bin}(N, 2^{-l})$, one has, for any $c>0$:
	$\E[\exp\{-cN(R)\}] = [1 -  2^{-l} + 2^{-l} \exp(-c)]^N$, which gives the
	desired result.
	The second inequality may be established in the same way.
\end{proof}

We can now proceed with the proof of the theorem.
The gist of the proof is to show that the probability that the estimated tree
is an $\varepsilon$-oracle is close to one. Consider a rectangle $R$, of volume
$2^{-l}$ (i.e., it has been obtained through $l$ splitting operations), and
assume that~\eqref{eq:close_call} does not hold for any $j$. Then, one chooses the
right coordinate, $j^\star = \arg\min_{j=1,\dots, s}\Delta(R, j)$ as soon as:
for all $j\neq j^\star$, $\widehat{\Delta}(R[j]^+) \geq \Delta(R[j]^+) (1 -
	\varepsilon)$, the same is true for $R[j]^-$, and, in addition,
$\widehat{\Delta}(R[j^\star]^+) \leq \Delta(R[j^\star]^+) (1 + \varepsilon)$,
and the same holds for $R[j^\star]^-$.

Conversely, using the union bound and \cref{lem:rectangle_bound}, we see that
the probability of not choosing the right coordinate (when splitting $R$) is
bounded by 
\[ 2 s e^\kappa \left(1 - 2^{-l-1} e^{-\kappa} \right)^N  \]
with the constant $\kappa>0$  as defined in  \cref{lem:rectangle_bound}.

More generally, the probability $p_{k,L}$ of making at least one `wrong' decision  while
constructing the tree (i.e. constructing a tree that is an
$\varepsilon$-oracle), until depth $L$, is such that
\begin{equation}\label{eq:bound_pl}
\begin{split}
	p_{k,L}\leq 2 s e^\kappa \sum_{l=1}^L 2^l \left(1 - 2^{-l} e^{-\kappa} \right)^N
	 & \leq 2 s e^\kappa L 2^L \left(1 - 2^{-L} e^{-\kappa} \right)^N                  \\
	 & \leq 2 s e^\kappa \exp\left\{ \log L + L \log 2 - 2^{k-L} e^{-\kappa}  \right\}.
\end{split}
\end{equation}
Remark that for all $\delta\in(0,\infty)$ we have
\begin{align*}
\exp\big(- \delta 2^{k-L}  \big)\leq   2^{-2 k}\Leftrightarrow L\leq
k-\frac{\log(2 k/\delta)}{  \log(2)}-\frac{\log(\log(2))}{\log(2)}. 
\end{align*}

Take $L_k = \lfloor k - c \log(k)^{1+\gamma} \rfloor $ for some constants
$c\in(0,\infty)$ and $\gamma\in(0,1)$. Then 
\begin{align}\label{eq:bound_p2} \exp\big(-  2^{k-L_k}e^{-\kappa}   \big)\leq  
  2^{-2k}=N^{-2}.
\end{align}
In words, the probability that the estimated tree and the $\varepsilon-$oracle
tree differ before depth $L_k$ is $\OO(N^{-2})$. Beyond depth $L_k$, we let the
$\varepsilon-$oracle tree choose arbitrarily the split decisions; this leads to
variance reduction factors which equal 1 in the worst case (splitting always
reduce the variance).

We conclude the proof by remarking that replacing the oracle by an
$\varepsilon$-oracle  degrades the convergence rate by a certain amount. More
precisely, adapting~\eqref{eq:rate_oracle} leads to 
\begin{align*}
  \var\left[\Ih_{\varepsilon-\mathrm{oracle}}\right] & \leq \left(\frac{
      \sum_{i=1}^s \frac{\beta^2_i}{\alpha^2_i}}{\frac{3}{4} + \sum_{i=1}^s
    \frac{\beta^2_i}{\alpha^2_i}}  \times \frac{1 + \varepsilon}{1 -
\varepsilon}\right)^{L_k}\\
     &=\left\{\left(\frac{ \sum_{i=1}^s
           \frac{\beta^2_i}{\alpha^2_i}}{\frac{3}{4}  + \sum_{i=1}^s
         \frac{\beta^2_i}{\alpha^2_i}}  \times \frac{1 + \varepsilon}{1 -
   \varepsilon}\right)^{1-(1-L_k/k)}\right\}^{k}\\
     &\leq \left\{\left(\frac{ \sum_{i=1}^s
           \frac{\beta^2_i}{\alpha^2_i}}{\frac{3}{4}  + \sum_{i=1}^s
         \frac{\beta^2_i}{\alpha^2_i}}  \times \frac{1 + \varepsilon}{1 -
   \varepsilon}\right)^{1-\varepsilon}\right\}^{k}
    \end{align*}
    where the  last inequality holds for $k$ large enough. This shows that 
    $\var[\Iest{MC}]= \OO(N^{-1-2r'_\varepsilon})$ with
    \begin{align*}
      r_\varepsilon'=-\frac{1-\varepsilon}{2\log 2}\log\left(\frac{
          \sum_{i=1}^s \frac{\beta^2_i}{\alpha^2_i}}{\frac{3}{4} + \sum_{i=1}^s
        \frac{\beta^2_i}{\alpha^2_i}}  \times \frac{1 + \varepsilon}{1 -
    \varepsilon}\right)<r(s).
  \end{align*}
  Since $\varepsilon\in(0,1)$ is arbitrary one can make $r_\varepsilon'$ as
  close as $r(s)$ as desired.

\end{document}